\newtheorem{definition}{Definition}[]
\newtheorem{theorem}{Theorem}[]
\newtheorem{remark}{Remark}[]
\newtheorem{corollary}{Corollary}[]
\newtheorem{proposition}{Proposition}[]
\newcommand{\R}{\ensuremath{\mathbb{R}}}
\begin{document}
	
	%-- TITEL ----------------------------------------------------------%
	\title{From Disequilibrium Markets to Equilibrium}
	\author{Christian Lax\footnote{Lehrstuhl A, RWTH Aachen, Templergraben 55, 52056 Aachen, Germany }, Torsten Trimborn\footnote{IGPM, RWTH Aachen, Templergraben 55, 52056 Aachen, Germany} \footnote{Corresponding author: trimborn@igpm.rwth-aachen.de}}

	\maketitle

   %-- INHALTSVERZEICHNIS ----------------------------------------------------------%

\begin{abstract}
The modeling of financial markets as disequilibrium models by ordinary differential equations has become a popular modeling tool. 
One famous example of such a model is the  Beja-Goldman model \cite{beja1980dynamic} which we consider in this paper. 
We study the passage from disequilibrium dynamics to equilibrium. Mathematically, this limit corresponds to an asymptotic limit also known as 
a Tikhonov-Fenichel reduction. Furthermore, we analyze the stability of the reduced equilibrium model and discuss the economic implications. 
 We conduct several numerical examples to visualize and support our analysis. 
\end{abstract}	
{\textbf{Keywords:} Beja-Goldmann Model, disequilibrium, equilibrium, rational market, asymptotic limit, Tikhnov-Fenichel reduction, high frequency trader }

\section{Introduction }	
In the past decades the interest in rational markets built on the general equilibrium theory has shifted to irrational markets also known as disequilibrium models. 
General equilibrium theory dates back to the early works of Walras \cite{walras2013elements} and has been further developed in the last century by McKenzie, Arrow and Debreu  \cite{walker2006walrasian}.
Heuristically speaking, an equilibrium price is reached when supply matches demand. This equilibrium theory in particular assumes that we have a rational market meaning that
there are no transaction costs and perfect informations. Many major contributions in finance such as the portfolio theory by Markowitz \cite{markowitz1952portfolio} and Merton \cite{merton1969lifetime} or the capital asset pricing model by Sharpe \cite{sharpe1964capital} and Lintner \cite{lintner1965security} are built on the rational market hypothesis. \\ \\
Mainly based on the restricted nature of the assumptions of a rational markets, the general equilibrium theory has been critizised \cite{beja1977orders, beja1980dynamic, heertje2002recent, ackerman2002still}. This has lead to the theory of disequilibrium markets \cite{beja1980dynamic,chiarella1986perfect,  he2011dynamic, day1990bulls} probably first introduced by Beja and Goldman \cite{beja1980dynamic}. Major contributions in the field of agent-based models are build on the idea of disquilibrium markets \cite{frankel1990chartists, day1990bulls, chiarella1992dynamics, chiarella1992developments, kirman1993ants, lux1995herd, hommes2006heterogeneous}. Before we are able to present the general idea of disequlibrium markets we have to introduce the notion of aggregated excess demand \cite{mantel1974characterization, debreu1974excess, sonnenschein1972market}.
The excess demand denotes the aggregated supply and demand of all financial agents. More precisely the excess demand $ED$ is defined as the sum of agents' supply subtracted from agent's demand. 
For an equilibrium price $P^*$ 
\begin{align}\label{ED}
ED(P^*)=0
\end{align}
has to hold. The price $P$ denotes the logarithmic stock price of an asset. More generally one can even consider $H(ED)=0$ for a nonlinear function $H$, zero at the origin as suggested by several authors  \cite{campbell1997econometrics, kempf1999market, cont2000herd}. Nevertheless, the linear form \eqref{ED} can be seen as linearization of the nonlinear form $H(ED)=0$ as argued in \cite{beja1980dynamic, SABCEMM}. 
The general form of a disequilibrium market model is then given by
\begin{align}\label{DisModel}
\frac{d}{dt} P(t) =  \frac{1}{\epsilon}\ ED(P(t)).
\end{align}
The constant $\frac{1}{\epsilon}$ denotes the market depth or the speed of price adjustment \cite{kempf1999market}.  In comparison to the equlibirum market model \eqref{ED} the market depth is finite. 
Mathematically speaking, the disequilibirum model \eqref{DisModel} is a relaxation of the algebraic relation \eqref{ED} with the relaxation parameter $\epsilon$. In this paper we study the asymptotic limit $\epsilon\to 0$ for disequilibrium models which consist of price equations of the type \eqref{DisModel} coupled to an additional ODE. Thus, we study the limit from an disequilibirum model to an equilibirum model.  \\ \\
Such asymptotic limits are also known in the case of a singular perturbation problem as Tikhonov-Fenichel reductions \cite{fenichel1979geometric, tikhonov1952systems, goeke2014constructive, Hoppensteadt}. This theory for dynamical systems has been mainly applied to chemical reaction kinetics \cite{heineken1967mathematical, lax2018singular, frank2018quasi}. 
The advantage of this asymptotic limit is to obtain a possibly much simpler reduced form of the original dynamical system. Such a reduced system can be conveniently analyzed by classical tools and is a good approximation of the original dynamics for small $\epsilon$. \\ \\
In this paper we consider two different asymptotic limits of the Beja-Goldman model. The Beja-Goldman model is a two dimensional dynamical system. The price equation is of the form \eqref{DisModel} and is coupled 
to the time evolution of the chartist estimate. More precisely, the aggregated excess demand is given by the excess demand of two representative agents,  chartist and fundamentalists. 
 First we study the so called liquid market limit which corresponds to an infinite large market depth. The reduced model can be seen as the equilibrium market version of the Beja-Goldman model. 
 Secondly we study the liquid chartist limit which can be seen as the limit of an infinite fast reaction speed of chartists. One may characterize this limit as a disequilibrium market model with high-frequency trader. 
The resulting reduced models are one dimensional ordinary differential equations coupled to an algebraic equation, which we can easily analyze.\\ \\
The outline of the paper is as follows. In the next section we introduce the Beja-Goldman model in detail and present the dynamical behavior. In section 3 we give an introduction to singular perturbation problems and the Tikonov-Fenichel reduction. Then we derive the reduced Beja-Goldman model in the liquid market and liquid chartist limit, present numerical experiments and analyze the obtained reduced models. 
In section 4 we give an economic interpretation of the results and a small conclusion of this work.

\clearpage

\section{The Beja-Goldman Model}
We introduce the Beja-Goldman model \cite{beja1980dynamic} and present the dynamical behavior of the model, as studied in \cite{beja1980dynamic}. 
The Beja-Goldman model with parameters $a>0,b> 0, r>0, F>0$ and scaling parameter (or relaxation parameter) $ \gamma,\epsilon>0$ is given by
\begin{subequations} \label{BGO}
\begin{align}
&\dot{P}(t)= \frac{1}{\epsilon} \left( a[ F-P(t)]+ b[\Psi(t) -r]  \right),\\
&\dot{\Psi}(t) = \frac{1}{\gamma} (\dot{P}(t)-\Psi(t)).
\end{align}
\end{subequations}
Here, $P\in\R$ denotes the logarithmic stock price and $\Psi\in\R$ so called chartist' price estimate. The parameter $r$ denotes the bond return and $F$ the fundamental price. The aggregated excess demand is defined as the sum of the fundamental ($ed^f$) and  chartists' ($ed^c$) demands:
\begin{align*}
&ed^f(t) := a\ (F-P(t)),\\
&ed^c(t) := b\ (\Psi(t)-r).
\end{align*}
The parameter $a$ respectively $b$ denote the market power of fundamentalists respectively chartists. This modeling approach of chartist and fundamental demand dates back to the work by Zeeman \cite{zeeman1974unstable} and has been frequently used in agent based modeling \cite{levy2000microscopic, chiarella2006asset, brock1997rational, brock1998heterogeneous, chiarella2006asset, franke2012structural}.  The scaling parameter $\epsilon$ denotes the inverse market depth. The chartists' estimate is defined as an relaxation of the price change. The parameter $\gamma$ is the inverse reaction speed of chartists.\\ 
We can rewrite the Beja-Goldman model \eqref{BGO} as follows:
\begin{subequations}\label{BG}
\begin{align}
&\dot{P}(t)= \frac{1}{\epsilon} \left( a[ F-P(t)]+ b[\Psi(t) -r]  \right),\label{PriceODE}\\ 
&\dot{\Psi}(t) = \frac{1}{\gamma} ([ \frac{1}{\epsilon} \left( a[ F-P(t)]+ b[\Psi(t) -r]  \right)]-\Psi(t)).\label{ChartistODE}
\end{align}
\end{subequations}
The matrix form of our system \eqref{BG} reads.
$$
\dot{\boldsymbol{X}} = \boldsymbol{A}\boldsymbol{X}+\boldsymbol{B}
$$
with
$$\boldsymbol{X}:=\begin{pmatrix}
 P\\ \Psi
\end{pmatrix} ,\quad 
 \boldsymbol{A}:=
\begin{pmatrix}
-\frac{a}{ \epsilon} & \frac{b}{\epsilon}\\
-\frac{a}{\gamma\ \epsilon} & -\frac{1}{\gamma}\ (1-\frac{b}{\epsilon})
\end{pmatrix},\quad \boldsymbol{B}:= \begin{pmatrix}    \frac{1}{\epsilon}\ (a\ F- b\ r)\\  \frac{1}{\epsilon \gamma}\ (a\ F-b\ r)\end{pmatrix}.$$
Mathematically, our model is an inhomogeneous linear differential system. The eigenvalues of $\boldsymbol{A}$ are given by
\begin{align*}
&\lambda_1:=\frac{-a\ \gamma+b-\epsilon-\sqrt{(a\gamma-b+\epsilon)^2-4\ a\ \gamma\ \epsilon}}{2\ \gamma\ \epsilon},\\
& \lambda_2:=\frac{-a\ \gamma+b-\epsilon+\sqrt{(a\gamma-b+\epsilon)^2-4\ a\ \gamma\ \epsilon}}{2\ \gamma\ \epsilon}
\end{align*}
and the eigenvectors are:
\begin{align*}
& \boldsymbol{v}_1:=\begin{pmatrix}\frac{a\ \gamma+b-\epsilon+\sqrt{(a\gamma-b+\epsilon)^2-4\ a\ \gamma\ \epsilon}}{2\ a}\\ 1 \end{pmatrix}, \quad \quad 
 \boldsymbol{v}_2:=\begin{pmatrix}\frac{a\ \gamma+b-\epsilon-\sqrt{(a\gamma-b+\epsilon)^2-4\ a\ \gamma\ \epsilon}}{2\ a}\\ 1 \end{pmatrix}.
\end{align*}

\begin{proposition}
The solution basis of the ODE
$$
\dot{\boldsymbol{Y}} = A\ \boldsymbol{Y},
$$
is given by
\begin{align*}
&\Phi_1: \R \to \R^2, t\mapsto e^{\lambda_1\ t}\  \boldsymbol{v}_1,\quad \quad \Phi_2: \R \to \R^2, t\mapsto e^{\lambda_2\ t} \ \boldsymbol{v}_2,\\
\end{align*}
in the case of real eigenvalues $\lambda_1,\lambda_2$. In the case of complex eigenvalues
$\lambda_2=\bar{\lambda}_1$ the solution basis reads:
\begin{align*}
&\Phi_1: \R \to \R^2, t\mapsto Re\Big( e^{\lambda_1\ t} \ \boldsymbol{v}_1\Big),\quad\quad \Phi_2: \R \to \R^2, t\mapsto Im\Big( e^{\lambda_2\ t} \ \boldsymbol{v}_2\Big).\\
\end{align*}
\end{proposition}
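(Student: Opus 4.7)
The plan is to verify directly that each $\Phi_i$ solves the homogeneous system $\dot{\boldsymbol{Y}} = A\,\boldsymbol{Y}$, and then to check that the two solutions are linearly independent, so that they form a basis of the two-dimensional solution space guaranteed by the Picard--Lindel\"of theorem for a linear ODE on $\R^2$.

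For the real eigenvalue case I would differentiate $\Phi_i(t) = e^{\lambda_i t}\boldsymbol{v}_i$ componentwise, which gives $\dot{\Phi}_i(t) = \lambda_i e^{\lambda_i t}\boldsymbol{v}_i$. Pulling the scalar factor through $A$ and using the eigenvalue relation $A\boldsymbol{v}_i = \lambda_i \boldsymbol{v}_i$, one obtains $A\,\Phi_i(t) = e^{\lambda_i t}(A\boldsymbol{v}_i) = \lambda_i e^{\lambda_i t}\boldsymbol{v}_i = \dot{\Phi}_i(t)$, so $\Phi_i$ is a solution. Linear independence follows from the Wronskian at $t=0$, which equals $\det(\boldsymbol{v}_1,\boldsymbol{v}_2)$; this is nonzero because eigenvectors of a $2\times 2$ matrix attached to distinct eigenvalues are linearly independent, and the closed-form expressions for $\boldsymbol{v}_1,\boldsymbol{v}_2$ already show this directly.

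For the complex case I would exploit that $A$ is real: if $\lambda_1 = \alpha + i\beta$ with $\beta\neq 0$ and eigenvector $\boldsymbol{v}_1 = \boldsymbol{u} + i\boldsymbol{w}$, then conjugating the eigenvalue equation yields $A\,\bar{\boldsymbol{v}}_1 = \bar{\lambda}_1\,\bar{\boldsymbol{v}}_1$, so $\lambda_2 = \bar{\lambda}_1$ and $\boldsymbol{v}_2 = \bar{\boldsymbol{v}}_1$, consistent with the closed-form expressions above. Since $t\mapsto e^{\lambda_1 t}\boldsymbol{v}_1$ is a (complex) solution of the real-coefficient ODE, taking real and imaginary parts commutes with the operator $\tfrac{d}{dt} - A$, so both $\mathrm{Re}(e^{\lambda_1 t}\boldsymbol{v}_1)$ and $\mathrm{Im}(e^{\lambda_1 t}\boldsymbol{v}_1)$ are real solutions. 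Expanding via Euler gives the explicit forms stated. Linear independence is again read off at $t=0$: $\Phi_1(0) = \boldsymbol{u}$, $\Phi_2(0) = \boldsymbol{w}$, and these must be $\R$-linearly independent, since otherwise $\boldsymbol{v}_1 = \boldsymbol{u} + i\boldsymbol{w}$ would be a complex multiple of a real vector, making it an eigenvector for a real eigenvalue and contradicting $\beta \neq 0$.

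The only real obstacle is a bookkeeping subtlety: in the degenerate sub-case where the discriminant $(a\gamma - b + \epsilon)^2 - 4a\gamma\epsilon$ vanishes, the two eigenvalues and eigenvectors coalesce and the stated basis collapses; one would then need a generalized eigenvector and a term of the form $t\,e^{\lambda t}$. I would either exclude this measure-zero parameter set explicitly or note that the proposition should be read as covering the case of \emph{distinct} real eigenvalues. Apart from that caveat the proof is routine linear algebra for a constant-coefficient $2\times 2$ system.
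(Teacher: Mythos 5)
Your argument is correct and is exactly the standard verification that the paper itself omits: the proposition is stated there without proof, as a textbook fact about constant-coefficient linear systems, and your eigenvalue-relation check plus Wronskian/real-and-imaginary-parts argument is the canonical way to fill it in. Your caveat about the coalescing case is also well taken: when $(a\gamma-b+\epsilon)^2-4a\gamma\epsilon=0$, i.e.\ $b=(\sqrt{\epsilon}\pm\sqrt{a\gamma})^2$ (the boundary curves of the oscillatory region in Figure 1), the two stated solutions coincide and one needs a generalized eigenvector with a $t\,e^{\lambda t}$ term, a sub-case the proposition silently excludes. The only cosmetic remark is that the paper writes $\Phi_2=\mathrm{Im}\big(e^{\lambda_2 t}\boldsymbol{v}_2\big)$ with the conjugate pair, which differs from your $\mathrm{Im}\big(e^{\lambda_1 t}\boldsymbol{v}_1\big)$ only by a sign and hence spans the same solution space.
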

We define the fundamental matrix of the Beja-Goldman model by $\Phi(t):=(\Phi_1, \Phi_2)\in\R^{2\times 2}$. 
Then, the solution of the Beja-Goldman model is given by
$$
\boldsymbol{X} = \Phi(t)\ \Phi(0)^{-1}\ \boldsymbol{X}(0) +\Phi(t)\  \int\limits_0^t \Phi(s)^{-1} \boldsymbol{B}(s)\ ds.
$$
We want to recap the stability results of the Beja-Goldman model as discussed in \cite{beja1980dynamic}. 
\begin{proposition}\label{Stability}
The system \eqref{BG} is stable if and only if $a>\frac{1}{\gamma}(b-\epsilon)$.\\
The system \eqref{BG} is oscillatory if and only if $(\sqrt{\epsilon}-\sqrt{a\ \gamma})^2< b<(\sqrt{\epsilon}+\sqrt{a\ \gamma})^2$.
\end{proposition}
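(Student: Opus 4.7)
The plan is to exploit the fact that the system is an affine linear ODE, so stability of the unique equilibrium $\boldsymbol{X}^\ast=-\boldsymbol{A}^{-1}\boldsymbol{B}$ is equivalent to both eigenvalues of $\boldsymbol{A}$ having strictly negative real part, and oscillation is equivalent to the eigenvalues being a genuine complex conjugate pair (nonzero imaginary part). Both conditions can therefore be read off the explicit formulas for $\lambda_1,\lambda_2$ already given before the proposition.

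For the oscillation criterion I would work first, as it is the cleaner computation. Oscillation is equivalent to the discriminant
\[
D:=(a\gamma-b+\epsilon)^2-4\,a\,\gamma\,\epsilon
\]
being strictly negative. Writing $D<0$ as $|a\gamma-b+\epsilon|<2\sqrt{a\gamma\epsilon}$ and solving the resulting two-sided inequality for $b$ yields
\[
a\gamma+\epsilon-2\sqrt{a\gamma\epsilon}<b<a\gamma+\epsilon+2\sqrt{a\gamma\epsilon},
\]
which is exactly $(\sqrt{\epsilon}-\sqrt{a\gamma})^{2}<b<(\sqrt{\epsilon}+\sqrt{a\gamma})^{2}$ after completing the square. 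This identifies the oscillatory regime.

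For the stability criterion I would split into the two cases complementary to the oscillation analysis. In the complex case ($D<0$) both eigenvalues share the common real part $\mathrm{Re}(\lambda_{1,2})=\frac{-a\gamma+b-\epsilon}{2\gamma\epsilon}$, which is strictly negative if and only if $a\gamma-b+\epsilon>0$, i.e.\ $a>\frac{1}{\gamma}(b-\epsilon)$. In the real case ($D\ge 0$) both eigenvalues are negative iff the larger one $\lambda_2$ is negative, i.e.\ $\sqrt{D}<a\gamma-b+\epsilon$. This forces $a\gamma-b+\epsilon>0$ (so again $a>\frac{1}{\gamma}(b-\epsilon)$) and, squaring, $-4a\gamma\epsilon<0$, which is automatic since $a,\gamma,\epsilon>0$. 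Conversely, $a>\frac{1}{\gamma}(b-\epsilon)$ together with $D\ge 0$ implies $\sqrt{D}<a\gamma-b+\epsilon$ by the same squaring step (using $-4a\gamma\epsilon<0$ again). Hence in both cases stability is equivalent to $a>\frac{1}{\gamma}(b-\epsilon)$.

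The only mildly delicate point is the real-eigenvalue case: squaring an inequality is only reversible when both sides are nonnegative, so one must first establish the sign of $a\gamma-b+\epsilon$ before and after squaring. Everything else is direct substitution into the eigenvalue formulas from Proposition 1, so I expect no conceptual difficulty beyond bookkeeping of these cases.
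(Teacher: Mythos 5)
Your proof is correct. There is, however, nothing in the paper to compare it against: the authors do not prove this proposition but merely ``recap'' it from the original Beja--Goldman reference, so your derivation from the explicit eigenvalue formulas supplies exactly the verification the paper leaves implicit. Two remarks. First, the case split in the stability part can be bypassed: for a real $2\times 2$ matrix, both eigenvalues have strictly negative real part if and only if $\operatorname{tr}\boldsymbol{A}<0$ and $\det\boldsymbol{A}>0$. Here $\det\boldsymbol{A}=\frac{a}{\gamma\epsilon}>0$ automatically and $\operatorname{tr}\boldsymbol{A}=\frac{b-a\gamma-\epsilon}{\gamma\epsilon}$, so stability is equivalent to $b<a\gamma+\epsilon$, i.e.\ $a>\frac{1}{\gamma}(b-\epsilon)$, in one line and uniformly over the real and complex regimes; the positivity of the determinant is also precisely what makes your squared inequality reduce to the tautology $-4a\gamma\epsilon<0$. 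Second, you correctly flag the reversibility of squaring as the only delicate step and handle it properly, and your reading of ``stable'' as asymptotic stability (strict negativity of the real parts) is consistent with the strict inequality in the statement and with the degenerate border case $b=\epsilon+\gamma a$ that the paper treats separately in its figures.
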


\begin{figure}[h!]
\begin{center}
\includegraphics[width=0.4\textwidth]{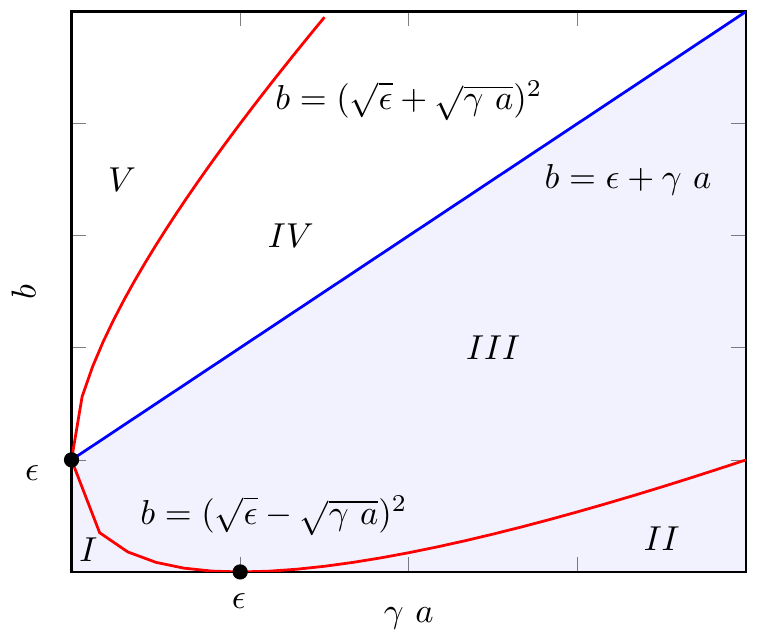}
\end{center}
\caption{Stability region with respect to the parameters $b$ and $\gamma\ a$. The colored area denotes the stable set. The oscillations occur in the area between the red curves denoted by $III$ and $IV$. (based on Figure 1 in \cite{beja1980dynamic}) }\label{Regions}
\end{figure}

We aim to visualize the different limit behavior of the price and chartist estimate as analyzed in Proposition \ref{Stability}. For sufficiently large market power of the fundamentalists it is always possible to obtain stable dynamics. More precisely stable dynamics lead to convergence of the chartist estimate to zero $\Psi_{\infty}= 0$ and to the equilibrium price $P_{\infty}:= F-\frac{b}{a}r$. This behavior can be obtained for the regions $I)-III)$, as defined in Figure \ref{Regions}, see Figure \ref{StabDyn} and Figure \ref{OscDyn}. Increasing the market power of chartists leads to oscillatory stable behavior (region $III$), then to oscillatory unstable behavior (region $IV$) and finally to a blow up (region $V$). This dynamic is depicted in the Figures \ref{OscDyn} and \ref{BUDyn}. Furthermore, we have an example of the dynamics at the stability border $b=\epsilon +\gamma\ a$ (see Figure \ref{BUDyn}).

\begin{figure}[h!]
\begin{center}
\includegraphics[width=0.45\textwidth]{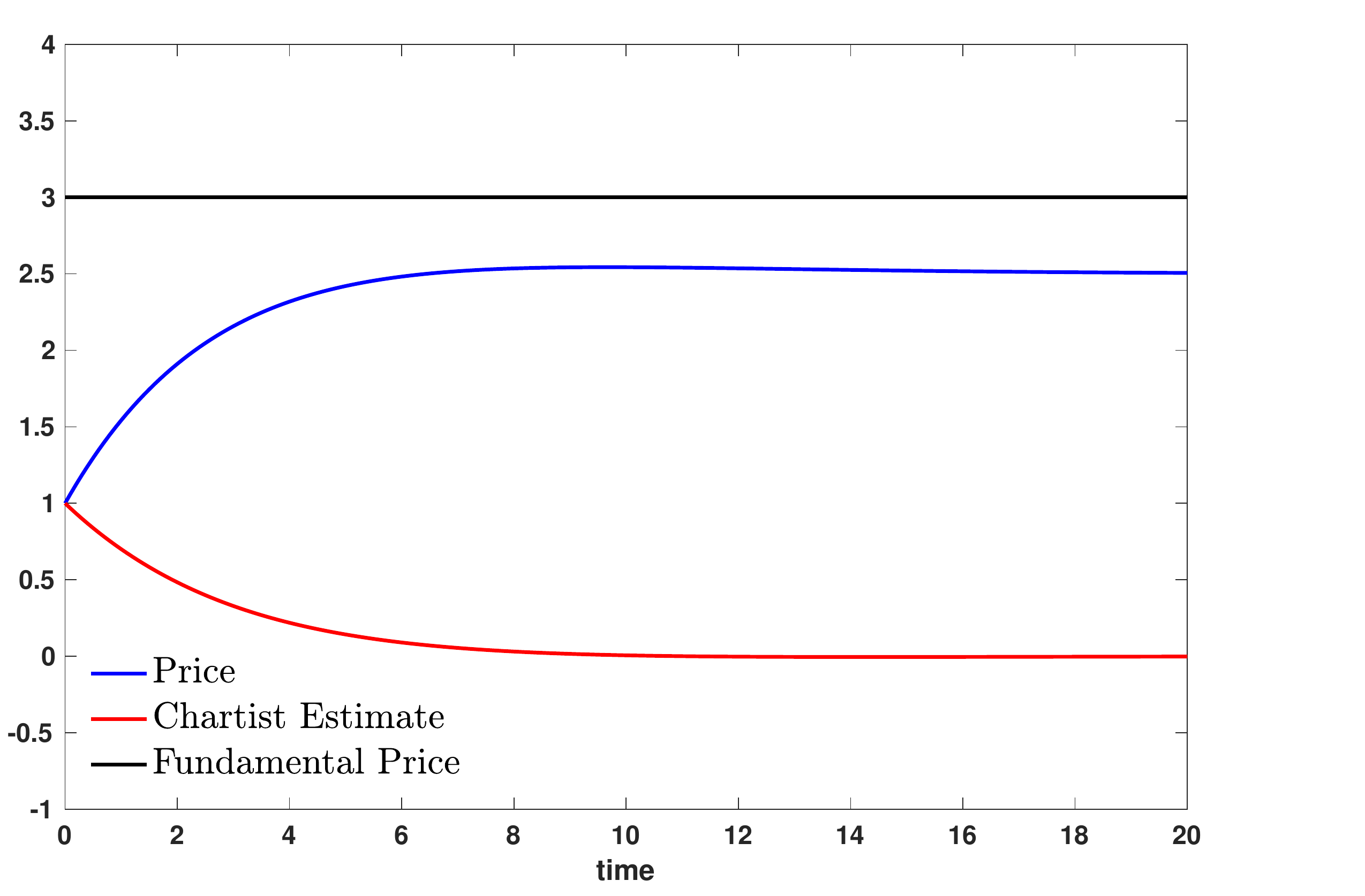}
\hfill
\includegraphics[width=0.45\textwidth]{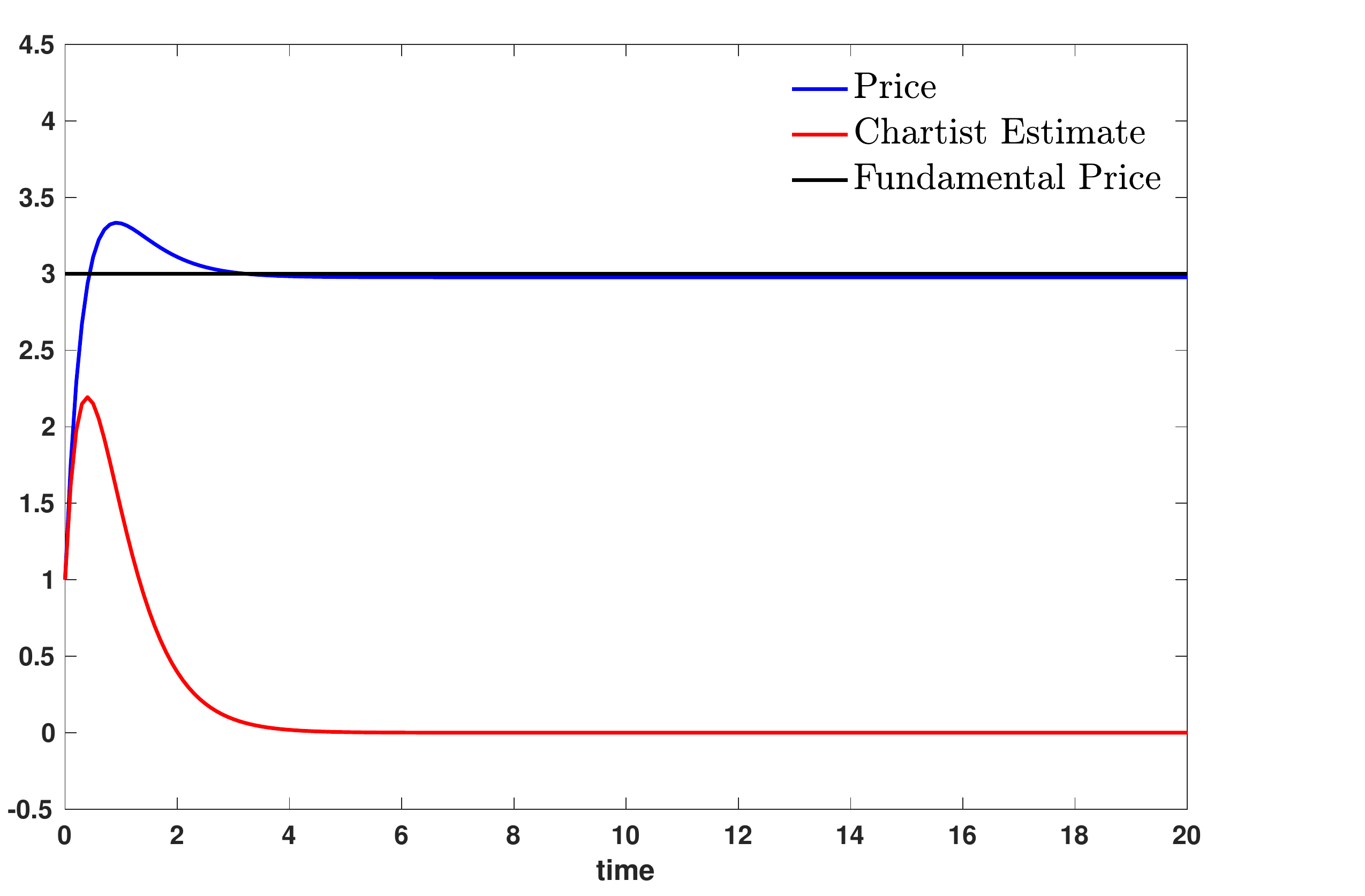}
\end{center}
\caption{LHS: Stability Region I with parameters $a=0.1,\ b=0.5,\ \gamma=1,\ \epsilon=1,\ r=0.1$.
RHS: Stability Region II with parameters $a=4,\ b=0.9,\ \gamma=1,\ \epsilon=1,\ r=0.1$. }\label{StabDyn}
\end{figure}

\begin{figure}[h!]
\begin{center}
\includegraphics[width=0.45\textwidth]{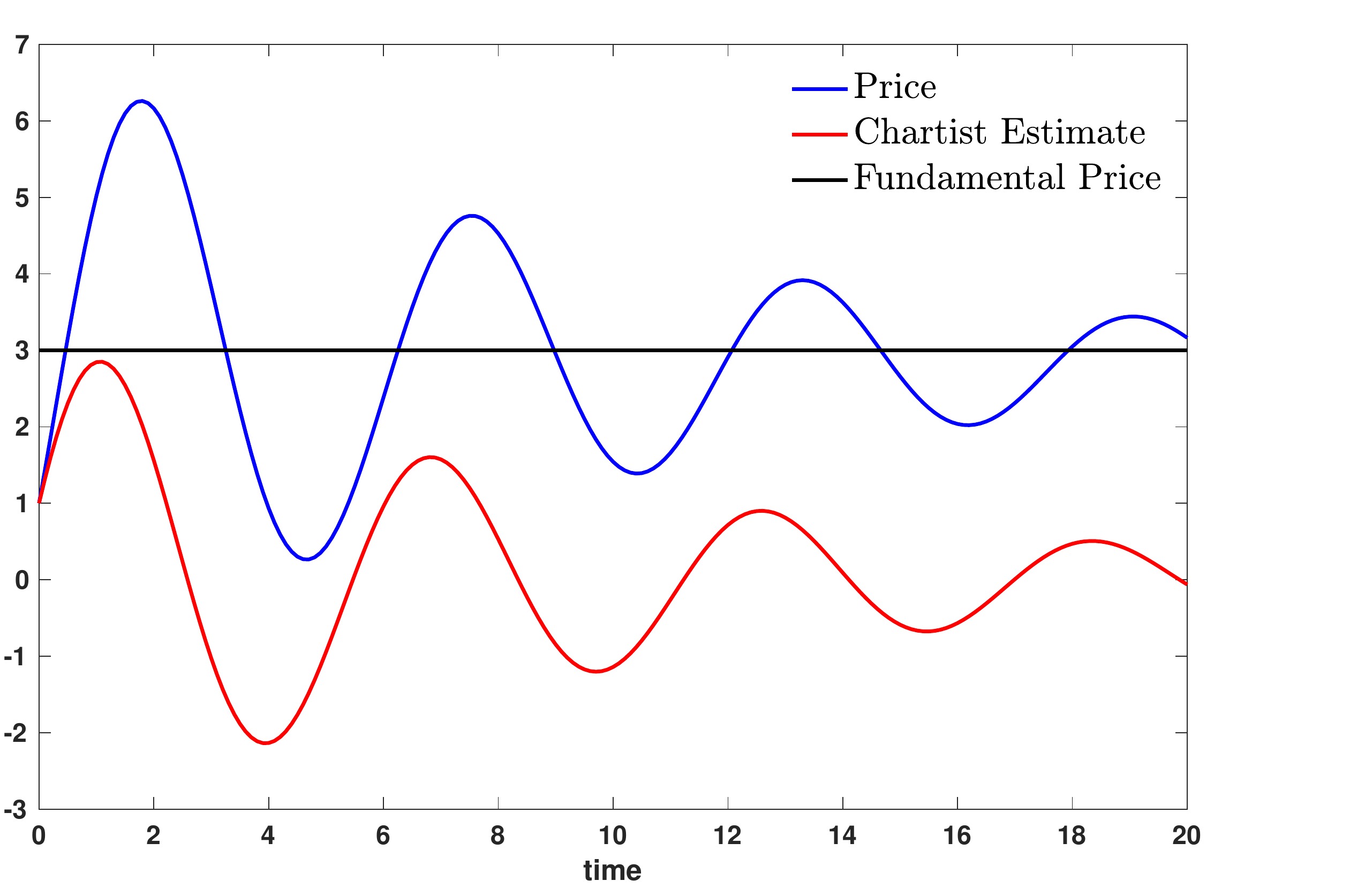}
\hfill
\includegraphics[width=0.45\textwidth]{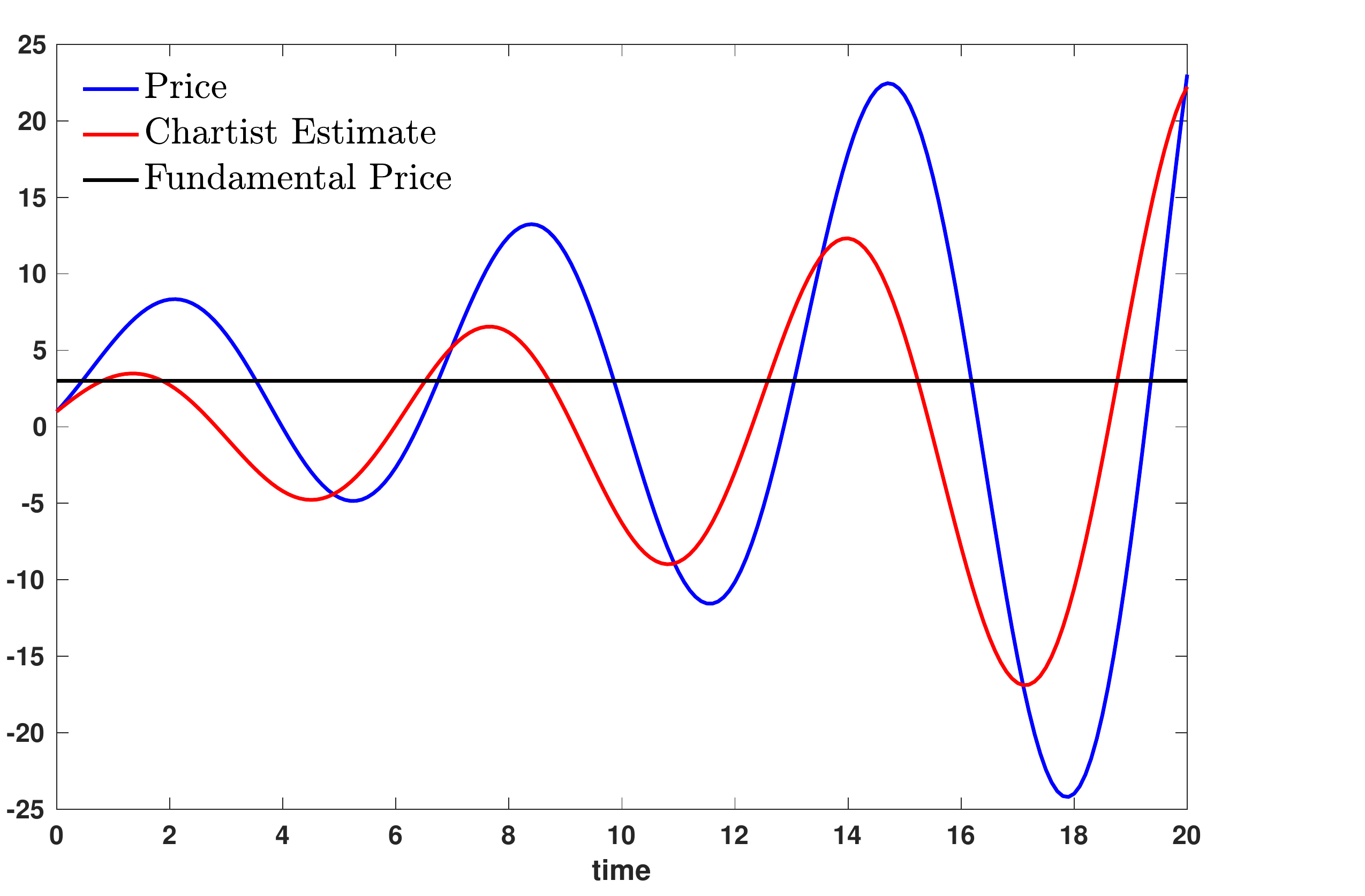}
\end{center}
\caption{LHS: Stability Region III with parameters $a=1.2,\ b=2,\ \gamma=1,\ \epsilon=1,\ r=0.1$.
RHS: Stability Region IV with parameters $a=1,\ b=2.2,\ \gamma=1,\ \epsilon=1,\ r=0.1$.}\label{OscDyn}
\end{figure}

\begin{figure}[h!]
\begin{center}
\includegraphics[width=0.45\textwidth]{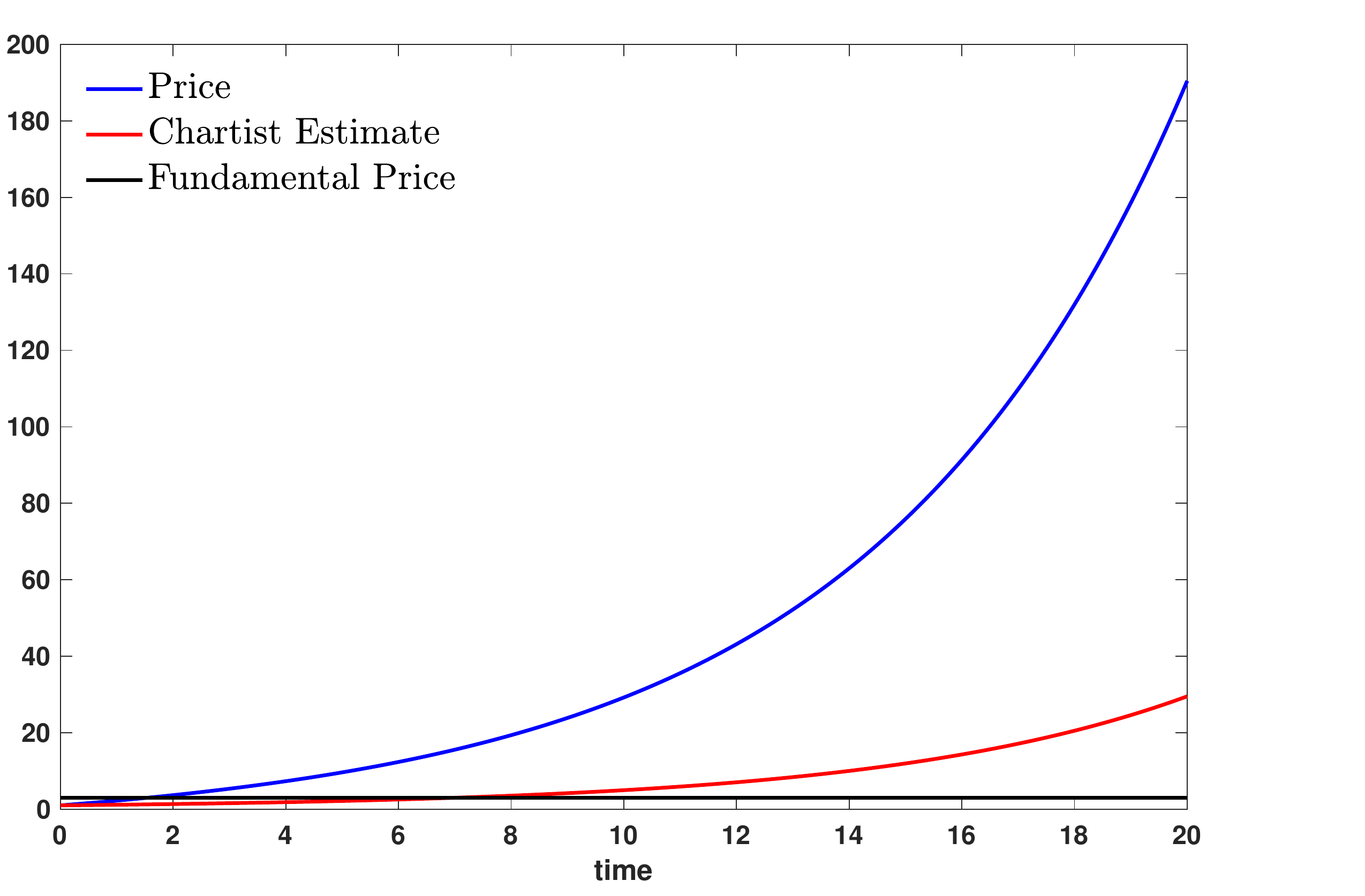}
\hfill
\includegraphics[width=0.45\textwidth]{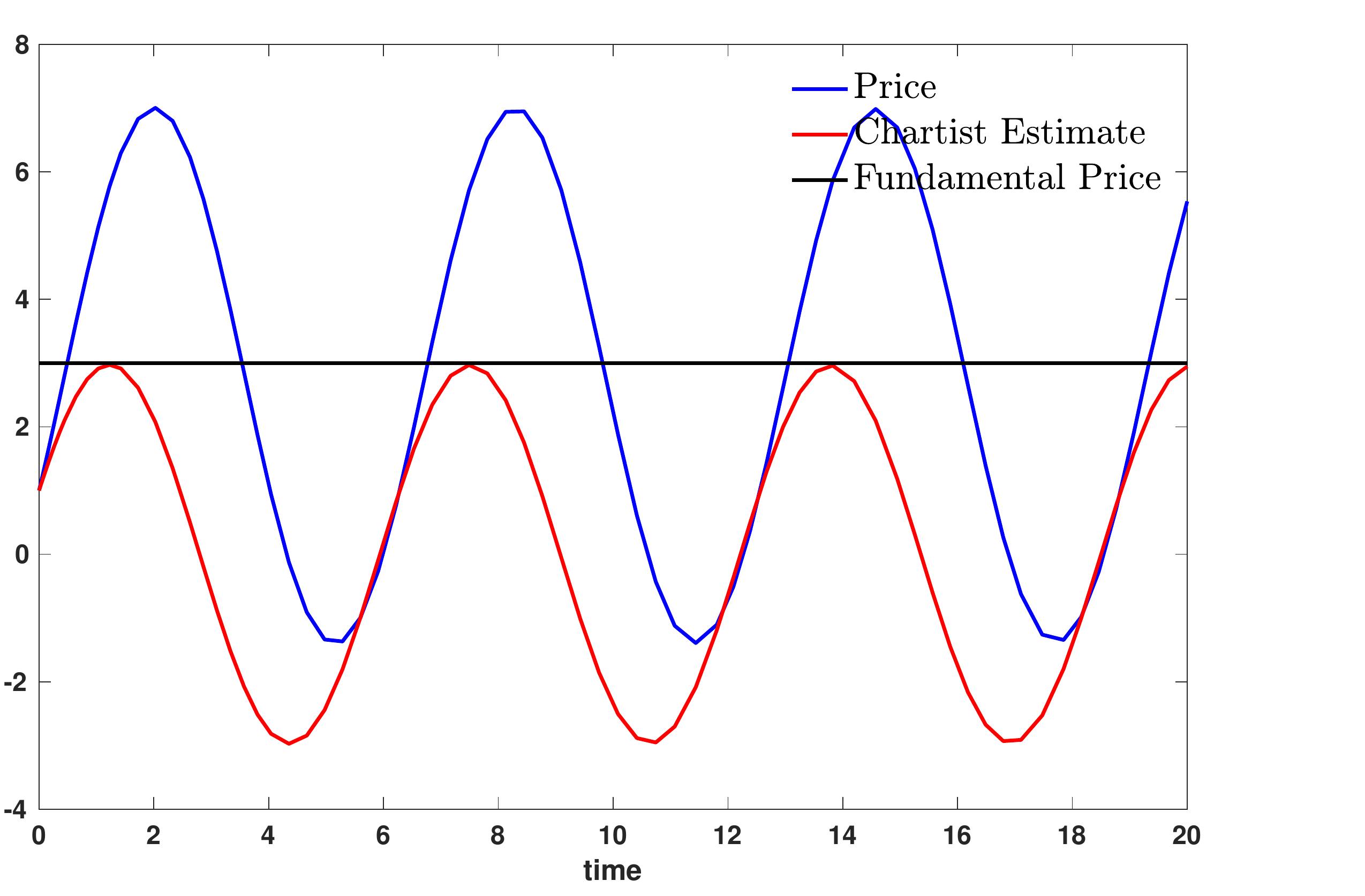}
\end{center}
\caption{LHS:Stability Region V with parameters $a=0.01,\ b=1.25,\ \gamma=1,\ \epsilon=1,\ r=0.1$.
RHS: Stability border $b=\epsilon +\gamma\ a$ with parameters $a=1,\ b=2,\ \gamma=1,\ \epsilon=1,\ r=0.1$.}\label{BUDyn}
\end{figure}

\clearpage

\section{Asymptotic Limits}
In this section we study the asymptotic limits of the Beja-Goldman model \eqref{BG}.
First, we define the precise form of a singular perturbation problem and then introduce the main tools in order to study asymptotic limits. 

\begin{definition}
Given an open set $D\subset \R^m$ and an analytic function $\xi: D\to\R^m$. Then the set
$$
\mathcal{V}(\xi):=\{  x\in\R^m, \xi(x)=0\}
$$
is called zero set of $\xi$. 
\end{definition}

\begin{definition}
We call a parameter dependent system of autonomous ordinary differential equations of the form
\begin{align}\label{sing}
\dot{x} = h(x,\epsilon)= h^{(0)}(x) + \epsilon h^{(1)}(x)+...,
\end{align}
with analytical function $h: D\times [0,\epsilon_0)\to R^m,\ \epsilon_0>0, D\subseteq \R^m$ a singular perturbation problem, 
provided that the zero set of  $h^{(0)}$ does not solely consists of isolated points. 
\end{definition}
By rescaling the time $\tau=\epsilon\ t$ we can rewrite \eqref{sing} as:
\begin{align}\label{rescaling}
\dot{x} = \frac{1}{\epsilon}h(x,\epsilon)= \epsilon^{-1} h^{(0)}(x) +  h^{(1)}(x)+...\quad .
\end{align}
The goal is to determine the limit system as $\epsilon \to 0$. %study the limit $\epsilon \to 0$ and to determine the corresponding limit system. 
By analyzing the limit system, which is of lower dimension and thus in general easier to analyze, one expects to gain information on
the original system \eqref{sing} for small $\epsilon$. In particular, the limit system is (for small $\epsilon$) a good approximation to the original system. Examples of such asymptotic reductions are given by \cite{grad1963asymptotic, frank2018quasi, heineken1967mathematical}.
To our knowledge such asymptotic limits have not been applied rigorously to any economic model. Formally (i.e. without any proof of convergence), singular perturbation models can be
studied by an asymptotic expansion also known as Hilbert expansion \cite{mckean1967chapman}. The ansatz for the solution $x$ of the system \eqref{sing} reads:
\begin{align}\label{AE}
x = x_0 + \epsilon\ x_1 + \epsilon^2\ x_2+ ...
\end{align}
As a next step one inserts the  asymptotic expansion \eqref{AE} into the original dynamical system. 
Then one can deduce the reduced system by comparing different orders of $\epsilon$. As one would expect this methodology does not always lead to the correct reduction as no convergence result is provided. Therefore, in this paper we consider a rigorous reduction approach based on the early works of Tikhonov and Fenichel \cite{fenichel1979geometric, tikhonov1952systems} as well as Hoppensteadt \cite{Hoppensteadt}. More precisely, we follow the constructive appraoch introduced by Goeke, Noethen and Walcher \cite{goeke2014constructive, noethen2011tikhonov} and use Lax, Selinger, Walcher \cite{LaxHoppensteadt} for an optimized convergence result. We will employ the following theorem as presented in \cite{goeke2014constructive}. 

\begin{theorem} \label{theo}
Given a system of the form \eqref{rescaling} for analytical $h$. We assume there exist a point $x_0$ in the zero set $\mathcal{V}(h^{(0)})$ of $h^{(0)}$, such that 
$rank(Dh^{(0)}(x_0))=r$ is maximal in a neighborhood of $x_0$. Then there exists a neighborhood $U$ of $x_0$ such that $\mathcal{U}= U\cap \mathcal{V}(h^{(0)})$
is a $s-$dimensional submanifold. If a sum decomposition of the form
$$
\R^m = kern\ Dh^{(0)}(x_0) \oplus im\ Dh^{(0)}(x_0),
$$
exists, then the following results hold:
\begin{itemize}
\item There exist a product decomposition
$$
h^{(0)}(x) = K(x)\ \mu(x),\quad x\in U,
$$
with analytic functions
$$
K: \R^m \to \R^{m \times r},\quad \mu: \R^m \to \R^r
$$
such that $rank(K(x_0))= rank(D\mu(x_0))=r$ holds. Furthermore, the zero set $Y$ of $\mu$ satisfies $Y\cap U = \mathcal{U}$. 
\item Define
$$
Q(x):= Id-K(x) (D\mu(x)K(x))^{-1}\ D\mu(x).
$$
Then the system 
\begin{align}\label{redSys}
x'=Q(x)\ h^{(1)}(x)
\end{align} is well defined on $U$. 
\item If there exists a $\mu>0, $ such that
$$
Re(\sigma^*(Dh^{(0)}(x_0))) \leq -\mu
$$
is fulfilled (i.e. the real part of every nonzero eigenvalue of the Jacobian $Dh^{(0)}$ at point $x_0$ is uniformly bounded from above by a negative constant), then there exists a $T>0$ and a neighborhood $U^*\subset U$ of $\mathcal{U}$ such that solutions of \eqref{rescaling} with initial conditions in $U^*$ for all $t_0>0$
on $[t_0,T]$ uniformly converge to the solution of the reduced system \eqref{redSys} on $\mathcal{U}$ as $\epsilon \to 0$. 
\end{itemize}
\end{theorem}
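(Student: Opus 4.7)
The plan is to treat the three bulleted conclusions in order. The first two follow from the maximal-rank and direct-sum hypotheses via local analytic arguments, while the third is the genuine singular-perturbation statement and is where the real work lies.

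\textbf{Product decomposition.} I would first construct $\mu$ by picking $r$ analytic scalar functions $\mu_1,\dots,\mu_r$ that cut out $\mathcal{U}$ locally with independent differentials at $x_0$: take $r$ linear combinations of the components of $h^{(0)}$ whose gradients span the row space of $Dh^{(0)}(x_0)$. These combinations vanish at $x_0$ and, by the constant-rank hypothesis together with the implicit function theorem, their common zero set agrees with $\mathcal{U}$ on a neighborhood of $x_0$. Since every component of $h^{(0)}$ then vanishes on $\{\mu=0\}$, the analytic Hadamard lemma (applied in coordinates where the $\mu_j$ become coordinate functions) furnishes an analytic matrix $K$ with $h^{(0)} = K\mu$ on $U$. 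Differentiating this identity at $x_0$ and using $\mu(x_0)=0$ yields $Dh^{(0)}(x_0) = K(x_0)D\mu(x_0)$, and the rank hypothesis then forces $\mathrm{rank}\,K(x_0) = \mathrm{rank}\,D\mu(x_0) = r$.

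\textbf{Well-definedness of the projector.} The identity above gives $\ker Dh^{(0)}(x_0) = \ker D\mu(x_0)$ (since $K(x_0)$ is injective) and $\mathrm{im}\,Dh^{(0)}(x_0) = \mathrm{im}\,K(x_0)$. The hypothesized direct-sum decomposition therefore reads $\ker D\mu(x_0) \oplus \mathrm{im}\,K(x_0) = \R^m$, which is equivalent to invertibility of the $r\times r$ matrix $D\mu(x_0)K(x_0)$. By continuity this persists on $U$, so $Q$ is analytic there. A short calculation gives $D\mu\cdot Q = 0$ and $Q\cdot K = 0$, exhibiting $Q(x)$ as the projector onto $\ker D\mu(x)$ along $\mathrm{im}\,K(x)$; in particular $Qh^{(1)}$ is tangent to $\mathcal{U}$, so \eqref{redSys} is a well-defined analytic ODE on $\mathcal{U}$.

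\textbf{Convergence.} This is the main obstacle. The spectral hypothesis makes $\mathcal{U}$ a normally hyperbolic attracting invariant manifold of the fast system $x' = h^{(0)}(x)$. I would pass to fast-slow coordinates $(y,z)$ near $x_0$ adapted to the splitting $\mathrm{im}\,K(x_0) \oplus \ker D\mu(x_0)$, so that \eqref{rescaling} takes the standard form $\epsilon\dot y = F(y,z,\epsilon)$, $\dot z = G(y,z,\epsilon)$ with $\partial_y F(0,z,0)$ similar to the nonzero spectral block of $Dh^{(0)}(x_0)$, hence uniformly stable. A Gr\"onwall estimate exploiting this uniform negative bound yields a boundary layer of length $O(\epsilon|\log\epsilon|)$ after which $y = O(\epsilon)$, which is what justifies restricting to $[t_0, T]$ for any $t_0 > 0$. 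An implicit-function / contraction-mapping argument in a suitable space of analytic sections then produces an invariant slow manifold $\{y = \eta(z,\epsilon)\}$ with $\eta(z,0) = 0$, on which the induced flow agrees with $Qh^{(1)}$ up to $O(\epsilon)$; standard continuous dependence of ODE solutions on the vector field closes the argument. Rather than reproduce the slow-manifold construction from scratch, I would invoke the formulation proved in \cite{goeke2014constructive} (building on \cite{fenichel1979geometric, tikhonov1952systems, Hoppensteadt}), with the sharpened time interval $[t_0,T]$ taken from \cite{LaxHoppensteadt}.
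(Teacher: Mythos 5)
The paper does not actually prove this theorem: it is imported verbatim from the literature (``We will employ the following theorem as presented in \cite{goeke2014constructive}''), so there is no in-paper argument to compare against. Your sketch correctly reconstructs the standard proof from that reference: the product decomposition $h^{(0)}=K\mu$ via an analytic Hadamard/division argument after cutting out $\mathcal{U}$ by $r$ functionally independent combinations of the components of $h^{(0)}$; the identification $\ker Dh^{(0)}(x_0)=\ker D\mu(x_0)$ and $\mathrm{im}\,Dh^{(0)}(x_0)=\mathrm{im}\,K(x_0)$, which turns the direct-sum hypothesis into invertibility of $D\mu(x_0)K(x_0)$ and exhibits $Q$ as the projection onto $\ker D\mu$ along $\mathrm{im}\,K$; and the Tikhonov/Fenichel convergence argument in fast--slow coordinates. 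Two small points. First, the step ``their common zero set agrees with $\mathcal{U}$ on a neighborhood of $x_0$'' deserves a word: one needs the constant-rank theorem to know $\mathcal{V}(h^{(0)})$ is itself an $(m-r)$-dimensional submanifold near $x_0$, after which the inclusion $\mathcal{V}(h^{(0)})\subseteq\{\mu=0\}$ between equal-dimensional manifolds forces local equality; without that, $\mathcal{V}(h^{(0)})$ could a priori be a proper subset of $\{\mu=0\}$. Second, the attribution of the interval $[t_0,T]$ to \cite{LaxHoppensteadt} is slightly off: that interval (excluding the initial layer, finite final time) is already the classical Tikhonov statement as formulated in \cite{goeke2014constructive}; what \cite{LaxHoppensteadt} adds, and what the paper's Remark and later propositions rely on, is the extension to all positive times under an additional stability hypothesis on the reduced system. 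Neither point undermines the proposal; as a blind reconstruction of a cited theorem it is sound.
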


\begin{remark}
\begin{itemize}
\item $\mathcal U$ is often called the slow manifold, which we also will do in the following.
\item One should emphasize that the convergence result stated above is only valid on an finite interval. After leaving the Interval (i.e. $\tau >T$) the solution may start to oscilate, blow-up or show any other behaviour; see e.g. \cite{kruff2019rosenzweig} Section 4.1.
\item In \cite{LaxHoppensteadt}, conditions are presented such that the convergence holds - loosely spoken - for all positive times. In particular, Proposition 2.10 therein implies that the existence of exactly one stationary point $z$ which is exponentially stable (i.e. the linearization of the reduces system at $z$ has only one negative eigenvalue) in tie with a technical condition is sufficient for the desired convergence for all positive times.
%\item A solution of \eqref{rescaling} with initial condition $x_0\in \R^m$ converges to the solution of \eqref{redSys} with initial conditions $\tilde x_0$, where $\tilde x_0$ is the solution of
%	\begin{align*}
%	 \mu(x_0)&=\mu(\tilde x_0),\quad \tilde x_0\in \mathcal U 
%	\end{align*}
\end{itemize}
\end{remark}

In the rest of the chapter we show that the Beja-Goldman model is a singular perturbation problem in the liquid market limit and liquid chartist limit. Then we derive the corresponding reductions by use of Theorem \ref{theo}. Finally, we analyze the reduced systems and present several numerical tests.

%This theorem considers a system in Tichonov normal form.
%\begin{definition}
%We call a system of ordinary differential equations in Tikhonov normal form if is satisfies the following structure.
%\begin{subequations}
%\begin{align*}
%&\dot{y}_1 = f(\tau, y_1,y_2)= f^{(1)}(\tau, y_1,y_2)+ \epsilon f^{(2)}(\tau, y_1,y_2)+...\\
%&\dot{y}_2 = \frac{1}{\epsilon} g(\tau, y_1,y_2)=\frac{1}{\epsilon} g^{(0)}(\tau, y_1,y_2) + \epsilon g^{(1)}(\tau, y_1,y_2)+... 
%\end{align*}
%\end{subequations}
%The functions $f,g$ are given by:
%\begin{align*}
%& f:  [0,\infty ) \times D\times [0,\epsilon_0) \to \R^s,\\
%& g: [0,\infty ) \times D\times [0,\epsilon_0)\to  \R^r,
%\end{align*}
%with $D=D_1\times D_2\subseteq \R^m,\ D_1\subset\R^s,\ D_2\subset\R^r$.
%\end{definition}

\subsection{Liquid Market Limit}
In this section we study the liquid market limit $\epsilon\to 0$ which corresponds to an infinite market depth. 
We treat the parameter $\gamma$ as a given parameter.

\begin{proposition}
The Beja-Goldman model as defined in \eqref{BG} is a singular perturbation model in the liquid market regime. 
\end{proposition}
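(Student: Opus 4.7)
The plan is to show that \eqref{BG} fits the defining form \eqref{rescaling} of a singular perturbation problem and then to verify that the zero set of the leading-order vector field $h^{(0)}$ is not a discrete collection of points. Since the ``rescaled'' form of Definition~2 is already the form in which \eqref{BG} is written, the proof will mostly be a matter of reading off the right-hand side.

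First I would observe that \eqref{BG} is already in the rescaled form \eqref{rescaling}. Expanding the chartist equation \eqref{ChartistODE} and separating the $\epsilon^{-1}$ and $\epsilon^{0}$ contributions, one reads off, with $x = (P,\Psi)^T \in \R^{2}$,
$$h^{(0)}(P,\Psi) \;=\; \begin{pmatrix} a(F-P)+b(\Psi-r) \\ \tfrac{1}{\gamma}\bigl(a(F-P)+b(\Psi-r)\bigr) \end{pmatrix}, \qquad h^{(1)}(P,\Psi) \;=\; \begin{pmatrix} 0 \\ -\Psi/\gamma \end{pmatrix},$$
with no higher order terms in $\epsilon$. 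Both components are analytic on $D=\R^{2}$, so $h(x,\epsilon) = h^{(0)}(x) + \epsilon\, h^{(1)}(x)$ has exactly the analytic form required in the definition.

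The second step is to compute $\mathcal V(h^{(0)})$. Since the two components of $h^{(0)}$ differ only by the nonzero multiplicative constant $1/\gamma$, they vanish simultaneously precisely on
$$\mathcal V(h^{(0)}) \;=\; \bigl\{(P,\Psi)\in\R^{2} : a(F-P)+b(\Psi-r)=0\bigr\},$$
which, because $a,b>0$, is a one-dimensional affine line in $\R^{2}$ and in particular does not consist of isolated points. This is the single nontrivial hypothesis in the definition, and the proposition follows.

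There is essentially no obstacle here. The only point where a careless reading could go astray is the time-scaling convention: one must recognize that \eqref{BG}, with its explicit $1/\epsilon$ factors, corresponds to the \emph{rescaled} form \eqref{rescaling} rather than to the unscaled form \eqref{sing}, so that $h^{(0)}$ is identified as the coefficient of $\epsilon^{-1}$ and $h^{(1)}$ as the coefficient of $\epsilon^{0}$. Once this bookkeeping is fixed, the proof reduces to the two lines above.
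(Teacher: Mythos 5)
Your proof is correct and follows essentially the same route as the paper: identify $h^{(0)}$ and $h^{(1)}$ from the $\epsilon^{-1}$ and $\epsilon^{0}$ terms of \eqref{BG}, and observe that $\mathcal V(h^{(0)})$ is the line $a(F-P)+b(\Psi-r)=0$, hence not a set of isolated points. Your added remarks on analyticity and on the time-scaling convention are accurate but not needed beyond what the paper records.
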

\begin{proof}
The slow manifold is given by the zero set of:
\begin{align*}
&h^{(0)}(P,\Psi):=  \begin{pmatrix}  a\ (F-P) +b\ (\Psi-r)\\  \frac{a}{\gamma}\ (F-P) +\frac{b}{\gamma}\ (\Psi-r) \end{pmatrix}. 
\end{align*}
Thus we have $\mathcal{V}(h^{(0)}) := \{(P,\Psi)^T\in\R^2: P= \frac{b}{a} (\Psi-r) + F  \}$, which does not solely consist of isolated points. 
\end{proof}
As a next step we derive the reduced system. \

\begin{proposition}\label{PropRedLM}
Given the Beja-Goldman system as defined in \eqref{BG} and that $a \gamma> b$ holds, then in the liquid market limit the reduced system of the model \eqref{BG} reads
\begin{align}\label{RedModLM}
\Psi'= -\frac{a}{a\gamma-b}\ \Psi, 
\end{align}
where the price is given by $P= \frac{b}{a}(\Psi-r)+F$ and the convergence holds for all $\tau >0$. The unique solution of the reduced system is given by
$$
\Psi(\tau)=\Psi (0)\ \exp\left(-\frac{a}{a\gamma-b}\ \tau \right)
$$
and converges to zero, as $\tau \to \infty$.
\end{proposition}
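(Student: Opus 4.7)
The plan is to apply Theorem \ref{theo} directly. First I would rewrite \eqref{BG} in the standard form \eqref{rescaling}: $h^{(0)}$ was already identified in the previous proof, and peeling off the remaining $-\Psi/\gamma$ term from $\dot\Psi$ produces
\[
h^{(1)}(P,\Psi) = \begin{pmatrix} 0 \\ -\Psi/\gamma \end{pmatrix}.
\]
Next I would observe that $h^{(0)}$ admits the obvious rank-one product decomposition $h^{(0)} = K\mu$ with
\[
K = \begin{pmatrix} 1 \\ 1/\gamma \end{pmatrix}, \qquad \mu(P,\Psi) = a(F-P) + b(\Psi-r),
\]
so that $D\mu = (-a,\,b)$ and $D\mu\cdot K = (b-a\gamma)/\gamma$. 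The hypothesis $a\gamma > b$ makes this scalar nonzero, which is exactly the condition for the sum decomposition $\R^2 = \ker Dh^{(0)} \oplus \mathrm{im}\,Dh^{(0)}$ required by Theorem \ref{theo}.

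I would then compute the projector $Q = I - K(D\mu K)^{-1}D\mu$ and apply it to $h^{(1)}$. A direct calculation yields a vector whose $\Psi$-component simplifies, via the identity $-\frac{1}{\gamma} - \frac{b}{\gamma(a\gamma-b)} = -\frac{a}{a\gamma-b}$, to the right-hand side of \eqref{RedModLM}. The $P$-component is redundant once we restrict to the slow manifold, since $P = \frac{b}{a}(\Psi - r) + F$ already determines $P$ from $\Psi$; consistency can be checked by differentiating this relation along the reduced flow.

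For the convergence statement I would verify the spectral hypothesis of Theorem \ref{theo}. The Jacobian $Dh^{(0)}$ has determinant zero and trace $(b-a\gamma)/\gamma$, so its nonzero eigenvalue equals $(b-a\gamma)/\gamma$, which is uniformly negative on the entire slow manifold under $a\gamma > b$. This yields the finite-interval convergence of Theorem \ref{theo}. To extend to all $\tau > 0$, I would invoke Proposition 2.10 of \cite{LaxHoppensteadt}: the reduced equation has $\Psi = 0$ as its unique stationary point with linearization $-a/(a\gamma-b) < 0$, hence exponentially stable, so the hypotheses there are met.

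The explicit solution formula is obtained by integrating the scalar linear ODE \eqref{RedModLM}, and the negative exponent forces $\Psi(\tau) \to 0$ as $\tau \to \infty$. I expect the only subtlety to be the algebra in the computation of $Qh^{(1)}$, where the cancellation producing the clean coefficient $-a/(a\gamma-b)$ must be handled carefully; verifying the auxiliary technical condition that accompanies Proposition 2.10 of \cite{LaxHoppensteadt} should be straightforward in this linear, two-dimensional setting.
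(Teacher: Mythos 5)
Your proposal follows essentially the same route as the paper's proof: the same $h^{(1)}$, the same decomposition $h^{(0)}=K\mu$ with $K=(1,1/\gamma)^T$ and $\mu=a(F-P)+b(\Psi-r)$, the same computation of $Q h^{(1)}$, the same spectral check of $Dh^{(0)}$, and the same appeal to Proposition 2.10 of \cite{LaxHoppensteadt} for convergence on all of $\tau>0$. The algebra and the cancellation to $-a/(a\gamma-b)$ check out, so this is a correct proof matching the paper's argument.
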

\begin{proof}
We utilize Theorem \ref{theo} and define:
\begin{align*}
& h^{(1)}(P,\Psi):= \begin{pmatrix}0 \\ -\frac{\Psi}{\gamma} \end{pmatrix}.
\end{align*}
The Jacobian of $h^{(0)}$ is given by:
$$
Dh^{(0)}(P,\Psi)= \begin{pmatrix}  -a & b \\ -\frac{a}{\gamma} & \frac{b}{\gamma}  \end{pmatrix}
$$
and hence the conditions of Theorem \ref{theo} are satisfies as the nonzero eigenvalue $\frac{b-a\gamma}{\gamma}$ is simple and negative for $a \gamma> b$.
Next, we define
$$
K:= \begin{pmatrix}  1\\ \frac{1}{\gamma}  \end{pmatrix},\quad \mu(P,\Psi):= a(F-P)+b\ (\Psi-r).
$$
In particular, $h^0=K\ \mu$ holds and the Jacobian of $\mu$ reads:
$$
D\mu:=D\mu(P,\Psi)=(-a , \  b). 
$$
Then $D\mu K=-a+\frac{b}{\gamma}$ and we can define $Q$ as follows:
\begin{align*}
Q:&= I-K (D\mu\ K)^{-1}\ D\mu\\
&= \begin{pmatrix} 1 & 0\\ 0 & 1   \end{pmatrix}+ \frac{1}{a-\frac{b}{\gamma}} \begin{pmatrix}-a & b \\-\frac{a}{\gamma} & \frac{b}{\gamma}  \end{pmatrix}\\
& =  \frac{1}{a-\frac{b}{\gamma}}\begin{pmatrix} -\frac{b}{\gamma}& b \\ -\frac{a}{\gamma} & a    \end{pmatrix}.
\end{align*}
Hence, we get
\begin{align*}
\frac{d}{dt}\boldsymbol{X} = Q\ h^{(1)}(P,\Psi)= -\frac{1}{\gamma\ a-b}\begin{pmatrix} b \\ a \end{pmatrix}\ \Psi
\end{align*}
Then the reduced system reads
$$
\Psi' = -\frac{a}{\gamma a-b} \Psi,
$$
where $P$ is given by 
$$
P= \frac{b}{a}(\Psi-r)+F,
$$
or alternatively by the previously computed ODE. The convergence for all $\tau >0$ follows by Proposition 2.10 of \cite{LaxHoppensteadt}, using that there exists only one stationary point of the limit system, which is exponentially stable. (We omit a discussion of the additional technical condition that needs to be satisfied. But it can easily be proven that the condition holds using the stability properties of the original system \eqref{BG}).\\
Computing the explicit solution is then basic analysis.
\end{proof}

\begin{corollary}
The rigorously reduced system in Proposition \ref{PropRedLM}  is identical to the heuristically derived system with an asymptotic expansion of type \eqref{AE}. 
\end{corollary}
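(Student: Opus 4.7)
The plan is to insert the formal ansatz $P=P_0+\epsilon P_1+\ldots$, $\Psi=\Psi_0+\epsilon\Psi_1+\ldots$ into the rescaled Beja-Goldman system written as $\dot x=\epsilon^{-1}h^{(0)}(x)+h^{(1)}(x)$, Taylor-expand the right-hand side around $x_0=(P_0,\Psi_0)^T$, and match powers of $\epsilon$. The zeroth-order of the ansatz plays the role of $x$ in the reduced system, and the first-order correction $x_1$ will have to be eliminated.

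First, the order $\epsilon^{-1}$ equation gives $h^{(0)}(x_0)=0$, which recovers exactly the slow manifold used in Proposition \ref{PropRedLM}, namely $P_0=F+\frac{b}{a}(\Psi_0-r)$. The order $\epsilon^{0}$ equation reads $\dot x_0=Dh^{(0)}(x_0)\,x_1+h^{(1)}(x_0)$ and still contains the unknown $x_1$. To eliminate it I would multiply from the left by a nonzero vector $\ell$ spanning the cokernel of $Dh^{(0)}$; a direct check shows that $\ell=(1,-\gamma)^T$ satisfies $\ell^T Dh^{(0)}=0$, so the contribution of $x_1$ drops out and I obtain the scalar compatibility condition $\dot P_0-\gamma\dot\Psi_0=\ell^T h^{(1)}(x_0)=\Psi_0$. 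Combining this with the time derivative of the constraint $h^{(0)}(x_0)=0$, namely $\dot P_0=\frac{b}{a}\dot\Psi_0$, and using $a\gamma\neq b$ (which is guaranteed by the hypothesis $a\gamma>b$ of Proposition \ref{PropRedLM}), a short algebraic manipulation yields $\dot\Psi_0=-\frac{a}{a\gamma-b}\Psi_0$. This is exactly the reduced system \eqref{RedModLM}, so the corollary follows.

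I do not expect any genuine obstacle. The equivalence reflects the general fact that the projection $Q=I-K(D\mu K)^{-1}D\mu$ appearing in Theorem \ref{theo} is precisely the projection onto $\ker Dh^{(0)}(x_0)$ along $\operatorname{im} Dh^{(0)}(x_0)$, which is what the left-null-vector elimination in the Hilbert expansion performs. The only delicate point is that the solvability condition for $x_1$ be consistent; this is ensured by the same rank condition and the same strict inequality $a\gamma>b$ that make Theorem \ref{theo} applicable in Proposition \ref{PropRedLM}, so both approaches succeed or fail under identical assumptions.
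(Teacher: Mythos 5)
Your proposal is correct and follows essentially the same route as the paper: insert the Hilbert expansion, read off the slow-manifold constraint at order $\epsilon^{-1}$, eliminate the first-order correction from the order-$\epsilon^{0}$ equations, and close the system using $\dot P_0=\tfrac{b}{a}\dot\Psi_0$ from the differentiated constraint. Your left-null-vector elimination with $\ell=(1,-\gamma)^T$ produces exactly the paper's intermediate relation $\dot\Psi_0=\tfrac{1}{\gamma}(\dot P_0-\Psi_0)$ (the paper obtains it by substituting the order-$\epsilon^0$ price equation into the order-$\epsilon^0$ chartist equation), so the two arguments coincide step for step.
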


\begin{proof}
We insert a Hilbert expansion:
\begin{align*}
&P(t)= P_0 + \epsilon\ P_1+\epsilon^2 \ P_2+...\\
&\Psi(t)= \Psi_0 + \epsilon\ \Psi_1+\epsilon^2 \ \Psi_2+...
\end{align*}
Then we can insert the previous expansion in our model and obtain for different orders:
\begin{align*}
&\text{Price equation:}\ &&\mathcal{O}(\frac{1}{\epsilon}): \quad P_0 = \frac{b}{a} (\Psi_0-r)+F\\
&\quad && \mathcal{O}(0): \quad \dot{P}_0 = - a\ P_1 +b\ \Psi_1\\
&\text{Chartist equation:}\ &&\mathcal{O}(\frac{1}{\epsilon}): \quad  P_0 = \frac{b}{a} (\Psi_0-r)+F\\
&\quad && \mathcal{O}(0): \quad \dot{\Psi}_0 =\frac{1}{\gamma} \left[      - a\ P_1 + b\ \Psi_1 -\Psi_0     \right] \\
\end{align*}
We use the price equation of order $\mathcal{O}(0)$ to rewrite the chartist equation of order $\mathcal{O}(0)$ as follows:
\begin{align*}
\dot{\Psi}_0 &=\frac{1}{\gamma} \left[      - a\ P_1 + b\ \Psi_1 -\Psi_0     \right]\\
&= \frac{1}{\gamma} \left[     \dot{P}_0 -\Psi_0     \right]
\end{align*}
As next step we utilize the equality
$$
\dot{P}_0=\frac{b}{a} \dot{\Psi}_0,
$$
which is obtained by differentiating the price equation of order $\mathcal{O}(\frac{1}{\epsilon})$. We get
\begin{align*}
\dot{\Psi}_0 &=\frac{1}{\gamma} \left[     \frac{b}{a} \dot{\Psi}_0 -\Psi_0     \right]\\
\iff \dot{\Psi}_0 &=-\frac{a}{a\gamma-b}\ \Psi_0,
\end{align*}
and thus the statement is shown. 
\end{proof}

\begin{itemize}
\item The solutions of the limit system \eqref{RedModLM} satisfies
\begin{align*}
&\lim\limits_{\tau\to\infty} \Psi(\tau) = 0,\\
&\lim\limits_{\tau\to\infty} P(\tau) = F-r \frac{b}{a}.
\end{align*}
The result is not surprising as the condition $a \gamma> b$ implies $a>\frac{1}{\gamma}(b-\epsilon)$ and thus corresponds to the original system \eqref{BG} being stable.
%\item As described in Remark ref, one can also descibe the convergence properties for a single solution: The solution $\Phi(\tau, \epsilon)$ of the originial system \eqref{BG} with repespect to the initial values $(P_0, \Psi_0)$ converges (as $\epsilon\to 0$) to the solution $\widetilde \Phi(\tau)$ of the limit system \eqref{RedModLM} with respect to the inital values
%	\[
%	 (\tilde P_0, \tilde \Psi_0):= 
%	\]
\item If $a \gamma< b$ holds then the slow manifold still exists and is still invariant, but now is repelling, i.e. every solution not starting on the slow manifold will never reach the slow manifold. Solutions starting on the slow manifold will now diverge to $\pm\infty$ as $\tau \to \infty$, which corresponds to the unstable case for system \eqref{BG}.
\item The case $a\gamma =b$ is degenerate: There exists no slow manifold in this case. The dynamic of the original system for different small values of $\epsilon$ is depicted in Figures \ref{DegMarketPos} and \ref{DegMarketNeg}. 
\end{itemize}

\begin{figure}[h!]
\begin{center}
\includegraphics[width=0.45\textwidth]{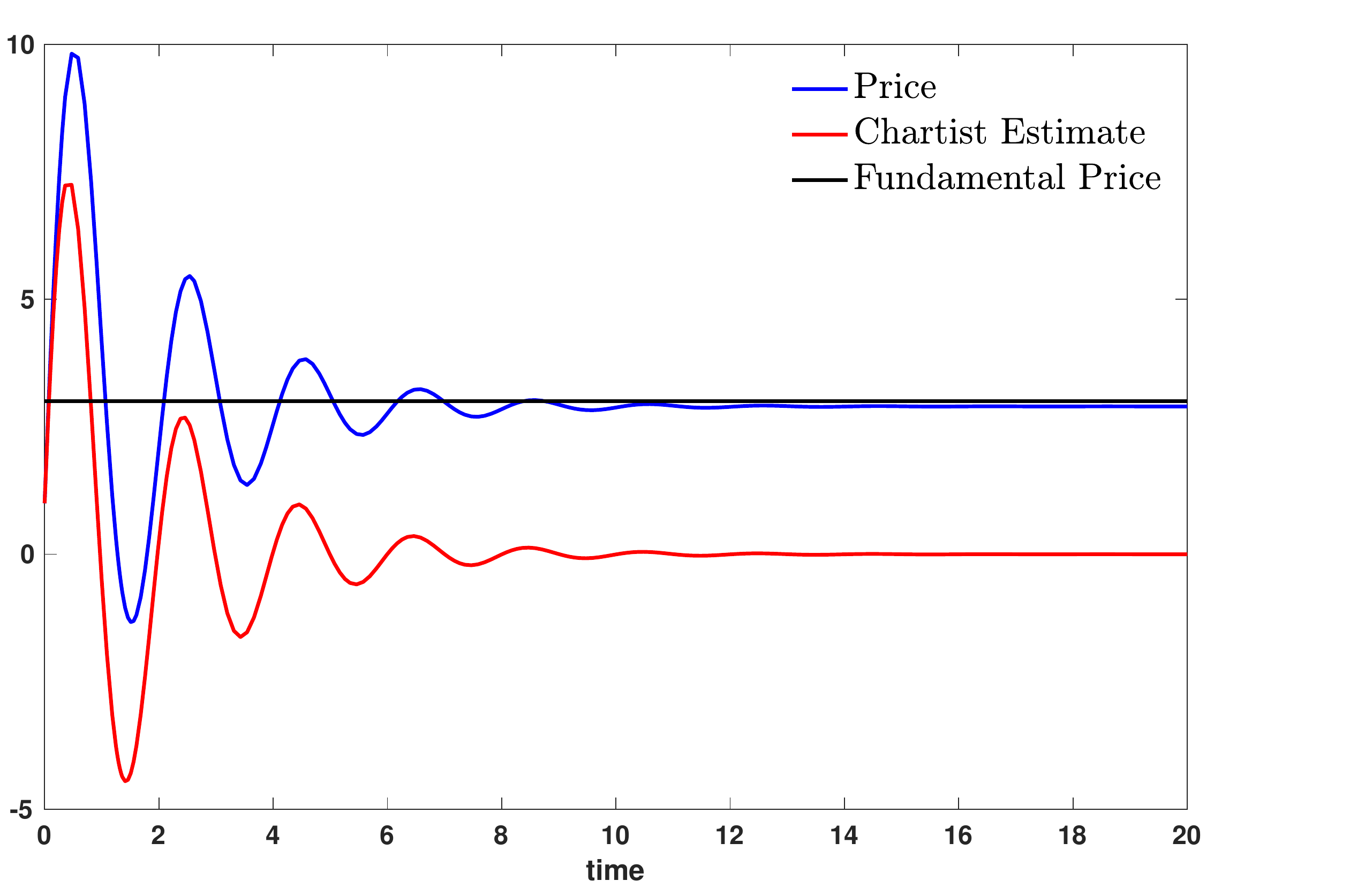}
\hfill
\includegraphics[width=0.45\textwidth]{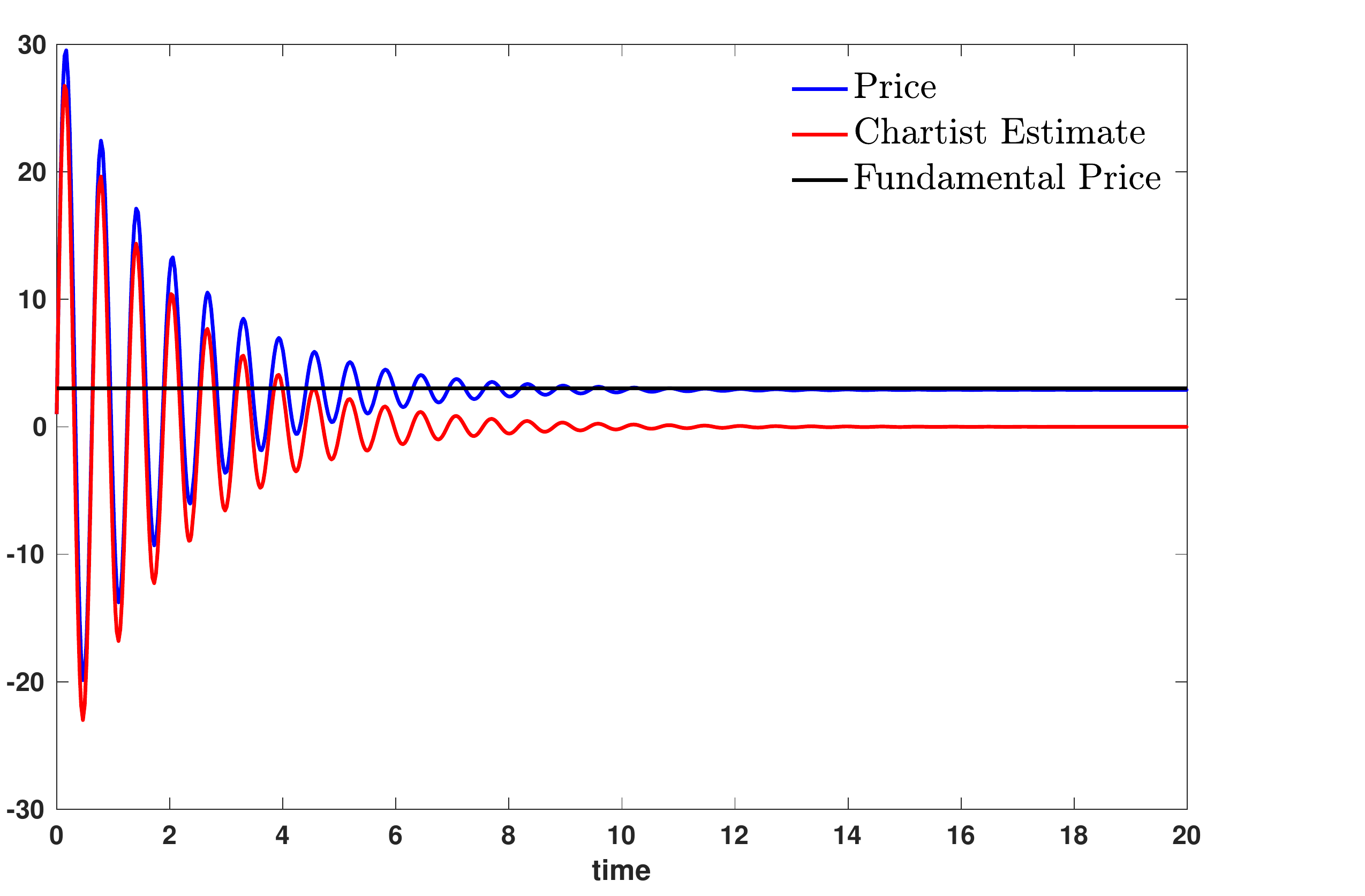}
\end{center}
\caption{LHS: Degenerate case with parameters $a=1,\ b=1,\ \gamma=1,\ \epsilon=0.1,\ r=0.1$.
RHS: Degenerate case with parameters $a=1,\ b=1,\ \gamma=1,\ \epsilon=0.01,\ r=0.1$.}\label{DegMarketPos}
\end{figure}

\begin{figure}[h!]
\begin{center}
\includegraphics[width=0.45\textwidth]{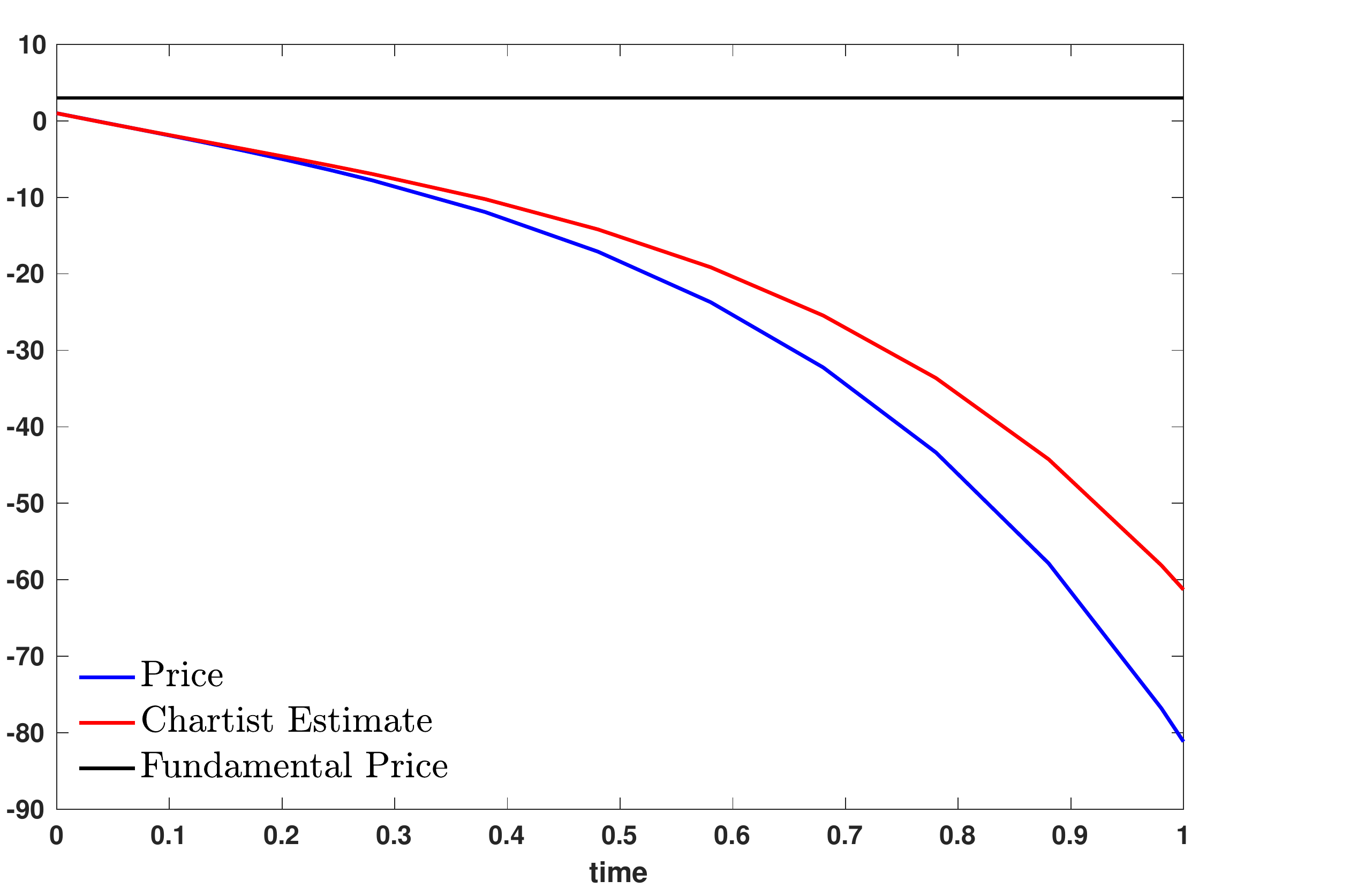}
\hfill
\includegraphics[width=0.45\textwidth]{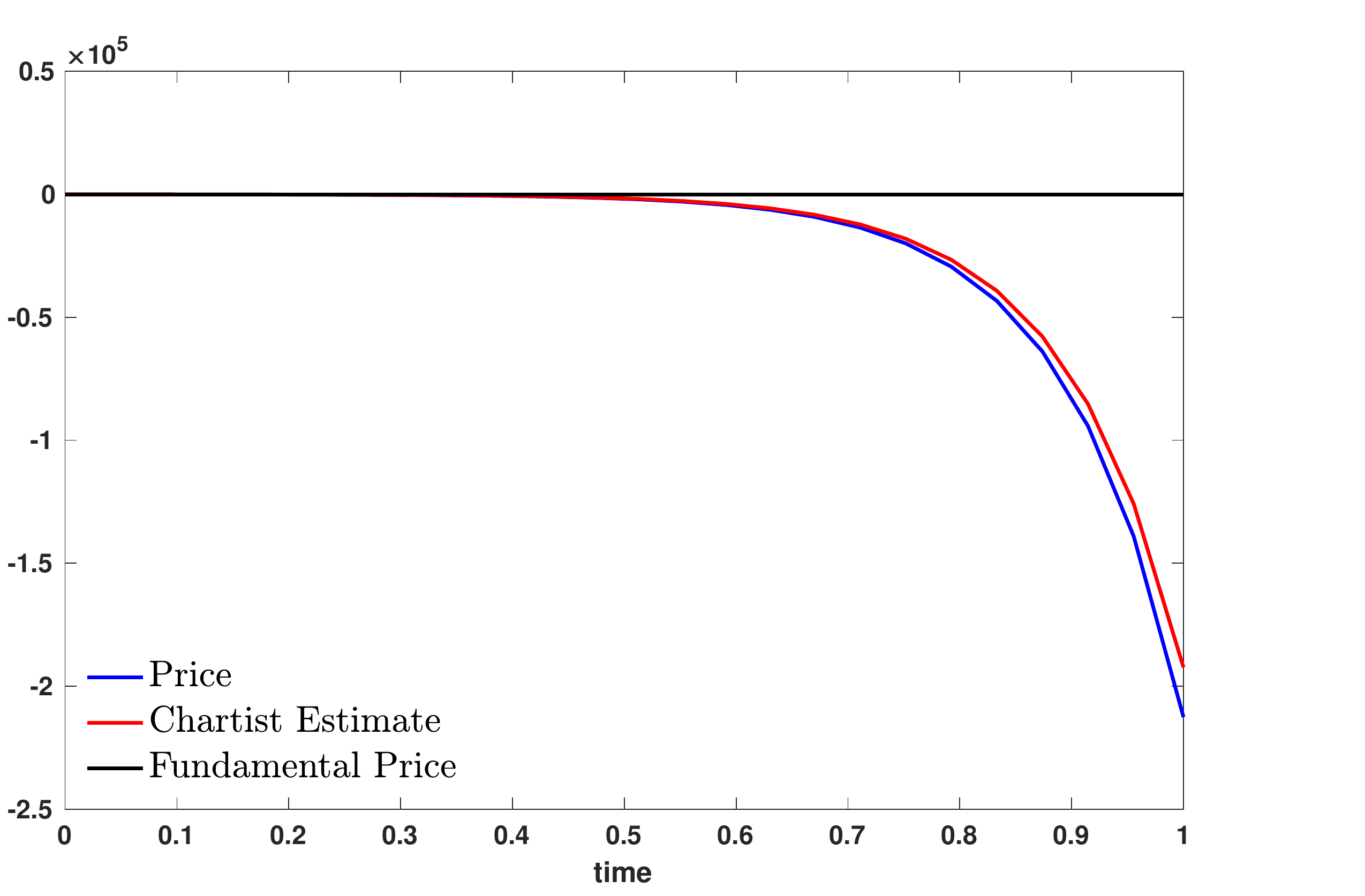}
\end{center}
\caption{LHS: Degenerate case with parameters $a=1,\ b=1,\ \gamma=1,\ \epsilon=-0.1,\ r=0.1$.
RHS: Degenerate case with parameters $a=1,\ b=1,\ \gamma=1,\ \epsilon=-0.01,\ r=0.1$.}\label{DegMarketNeg}
\end{figure}

\clearpage

%\textcolor{red}{Welche Anfangswerte hast du für die Numerik verwendet? Für Anfangswerte $(P_0,\Psi_0)$ vom Originalsystem ergeben sich die richtigen reduzierten Anfangswerte $(\tilde P_0, \tilde \Psi_0)$
%\begin{align*}
% \tilde P_0 &= P_0 - \gamma (\Psi_0-\tilde \Psi_0)\\
% \tilde \Psi_0 &= \frac{1}{a\gamma-b}(a(F-P_0)+a\gamma\Psi_0-br)
%\end{align*}
%Dies wirkt vlt etwas unintuitiv, ergibt sich aber aus der Slow Manifold Bedingung $a(F-\tilde P_0)+b(\tilde \Psi_0-r)=0$ in Verbindung mit dem 1. Integral $P-\gamma \Psi$ des schnellen Systems $x'=h^0(x)$ (liefert $P_0-\gamma \Psi_0=\tilde P_0-\gamma \tilde\Psi_0$).
%}
Finally, we present some numerical results. We have computed the solutions of the original Beja-Goldman model with the Matlab solver \texttt{ode15s} which is especially designed for stiff ordinary differential equations. From Figure \ref{RedM} we can deduce that the original model asymptotically converges to the reduced model if $a \gamma> b$. This asymptotic behavior of both models for different parameters $\epsilon$ can be seen in Figure \ref{ErrorM}. Thus, our numerical test validate our analysis that in this case the Beja-Goldman model converges to the reduced model \eqref{RedModLM} as $\epsilon\to 0$. 
In addition we study the case if no reduction exists ($a\gamma <b$). In fact the slow manifold still exists but is repelling. In Figure \ref{RepMani} we show the cases of initial values laying on the slow manifold or not laying on the slow manifold. One can observe that the original system is always repelling but diverges slower in the case of initial values on the slow manifold.

\begin{figure}[h!]
\begin{center}
\includegraphics[width=0.6\textwidth]{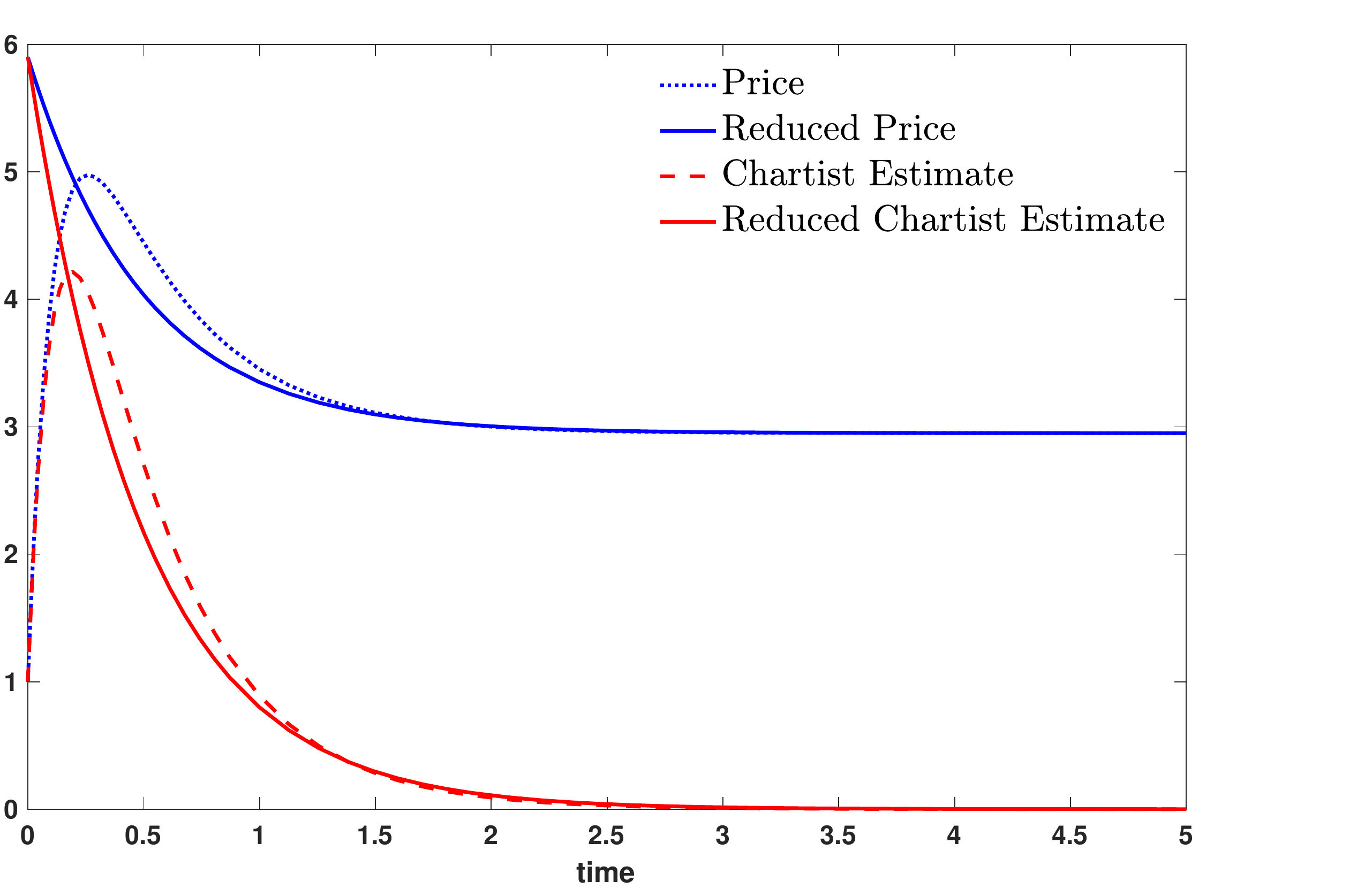}
\end{center}
\caption{Time evolution of the original Beja Goldman model and reduced model in the liquid market regime ($\epsilon= 0.1, a=2, b=1,\gamma=1,F=3,r=0.1$).}\label{RedM}
\end{figure}

\begin{figure}[h!]
\begin{center}
\includegraphics[width=0.45\textwidth]{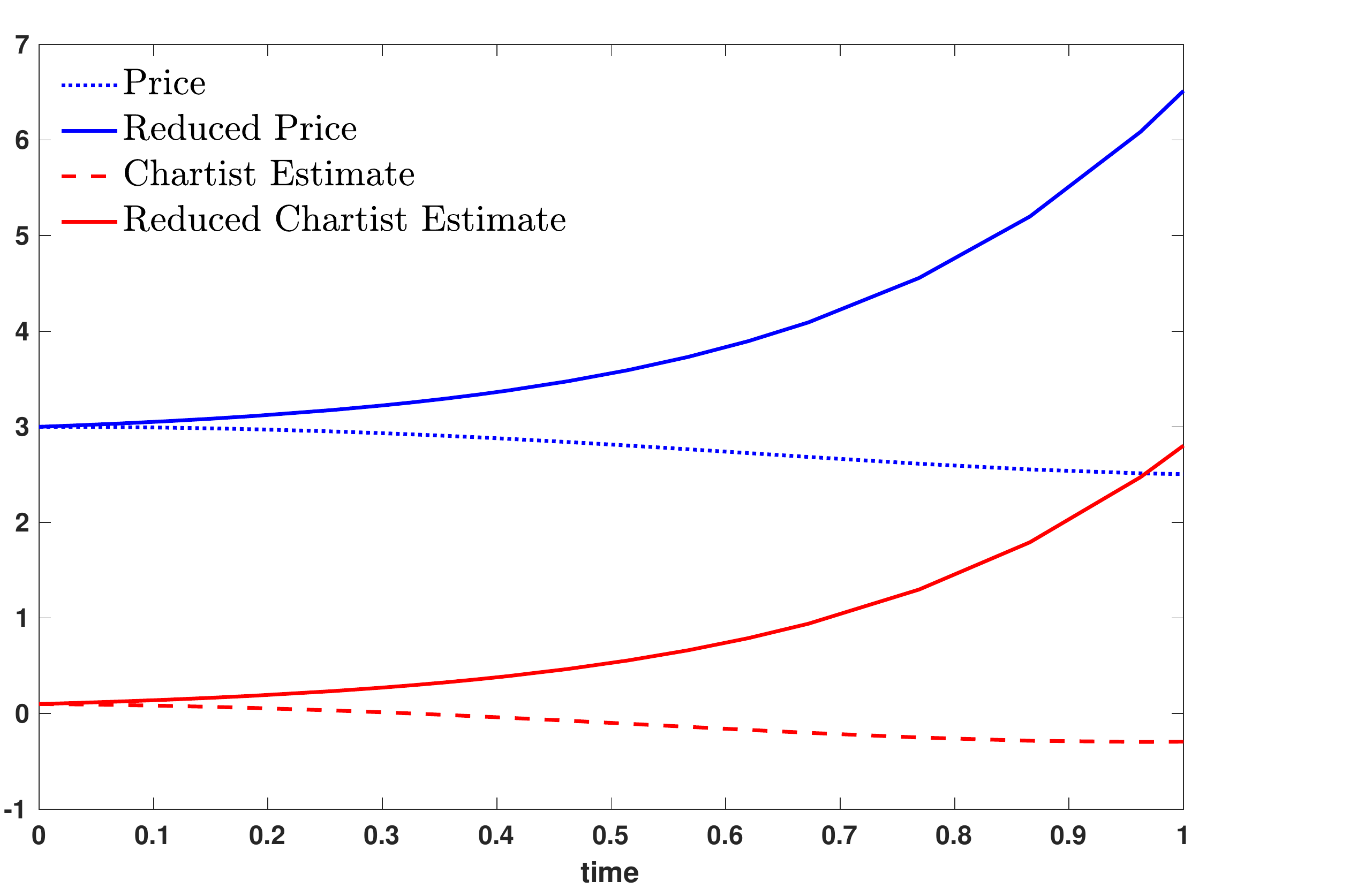} \hfill
\includegraphics[width=0.45\textwidth]{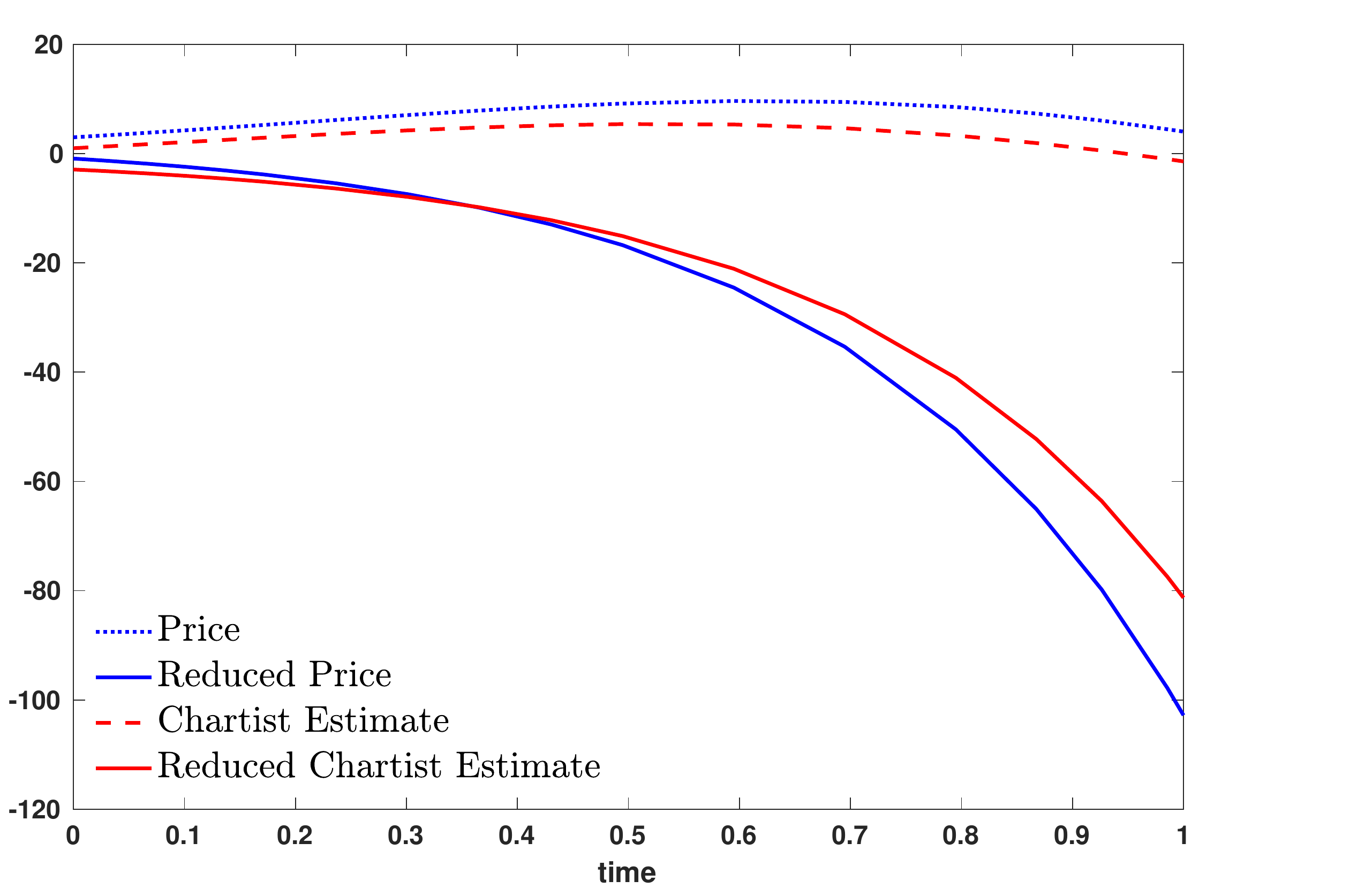}
\end{center}
\caption{Non-stable case with diverging reduced model. Parameters are given by $\epsilon= 0.1, a=1, b=1.3,\gamma=1,F=3,r=0.1$.
LHS: Initial values $P_0=3,\ \Psi = 0.1$ are located on the slow manifold.
RHS: Initial values $P_0=3,\ \Psi = 1$ are not located on the slow manifold.
 }\label{RepMani}
\end{figure}

\begin{figure}[h!]
\begin{center}
\includegraphics[width=0.6\textwidth]{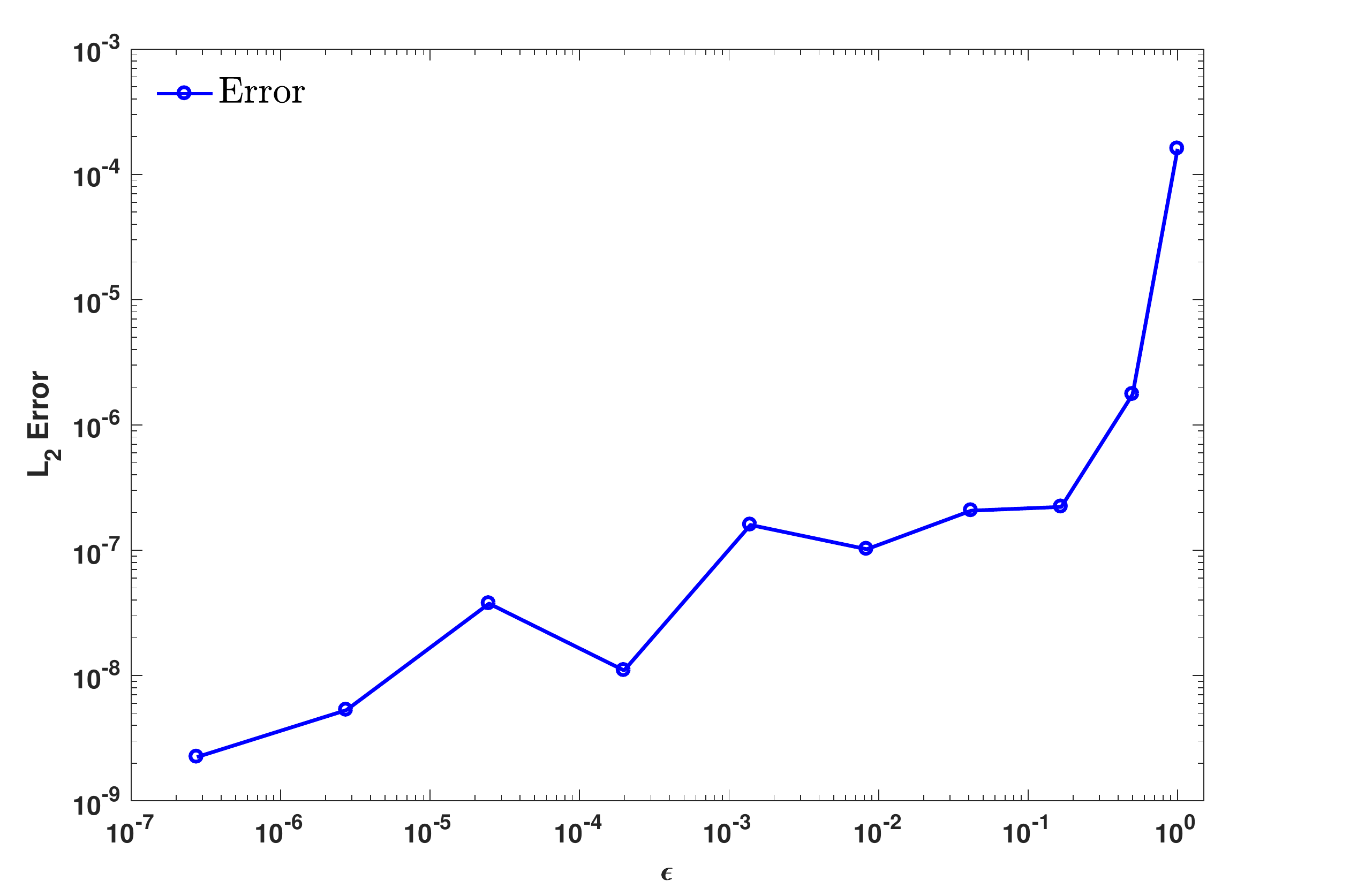}
\end{center}
\caption{The $L_2$ distance at time $t=10$ of the original Beja-Goldman model and the reduced model for different values $\epsilon$. The other parameters are chosen to be: $ a=2, b=1,\gamma=1,F=3,r=0.1$ }\label{ErrorM}
\end{figure}

\clearpage
\subsection{Liquid Chartist Limit}
In this section we study the limit $\gamma \to 0$ which corresponds to a infinite fast reaction speed of chartists.
The parameter $\epsilon$ is treated as a constant and is fixed. 
 
\begin{proposition}
The Beja-Goldman model as defined in  \eqref{BG} is a singular perturbation model in the liquid chartist regime. 
\end{proposition}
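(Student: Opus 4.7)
The plan is to mimic the argument for the liquid market case but with $\gamma$ now serving as the small parameter while $\epsilon$ is treated as a fixed constant. The subtlety compared to the previous proposition is that $\gamma$ appears only in the chartist equation \eqref{ChartistODE} and not in the price equation \eqref{PriceODE}. To bring \eqref{BG} into the canonical form $\dot{X} = \tfrac{1}{\gamma} h(X,\gamma)$ required by the definition of a singular perturbation problem, I would multiply both components of the vector field by $\gamma$ and obtain
\[
h(P,\Psi,\gamma) = \begin{pmatrix} \tfrac{\gamma}{\epsilon}\bigl(a(F-P)+b(\Psi-r)\bigr) \\[4pt] \tfrac{1}{\epsilon}\bigl(a(F-P)+b(\Psi-r)\bigr) - \Psi \end{pmatrix},
\]
which is linear (and in particular analytic) in $\gamma$ with expansion coefficients
\[
h^{(0)}(P,\Psi) = \begin{pmatrix} 0 \\[2pt] \tfrac{1}{\epsilon}(a(F-P)+b(\Psi-r)) - \Psi \end{pmatrix}, \qquad h^{(1)}(P,\Psi) = \begin{pmatrix} \tfrac{1}{\epsilon}(a(F-P)+b(\Psi-r)) \\[2pt] 0 \end{pmatrix}.
\]

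The second step is to verify that $\mathcal{V}(h^{(0)})$ is not a set of isolated points. Since the first component of $h^{(0)}$ vanishes identically, the zero set is determined by the single scalar equation $\tfrac{1}{\epsilon}(a(F-P)+b(\Psi-r)) = \Psi$, equivalent to $aP = aF + (b-\epsilon)\Psi - br$. This describes an affine line in $\R^2$, a one-dimensional submanifold that certainly does not consist solely of isolated points, so the defining condition of a singular perturbation problem is met.

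The main obstacle is conceptual rather than computational: one has to recognize that because $\gamma$ is absent from the price ODE, factoring $\tfrac{1}{\gamma}$ out of the combined system forces the first component of $h$ to carry an additional factor of $\gamma$, so that the leading term $h^{(0)}$ has a vanishing first entry. Consequently, the slow manifold in the liquid chartist limit is cut out by only one nontrivial equation (rather than two proportional ones as in the liquid market proposition), and it differs geometrically from the manifold obtained there. Aside from this observation the argument is elementary; I expect the follow-up derivation of the reduced ODE (not required here) to be the more involved computation, since it will need the product decomposition $h^{(0)} = K\mu$ and the projector $Q$ of Theorem \ref{theo}.
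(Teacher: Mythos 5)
Your proposal is correct and follows essentially the same route as the paper: identify $h^{(0)}$ with vanishing first component and observe that its zero set is an affine line, hence not a set of isolated points. You are in fact more explicit than the paper about why factoring out $\tfrac{1}{\gamma}$ forces the first entry of $h^{(0)}$ to vanish, and your equation for the line is consistent with the (sign-corrected) slow manifold $\Psi=\tfrac{1}{\epsilon-b}\left(a(F-P)-br\right)$ used later in the paper.
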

\begin{proof}
The slow manifold is given by the zero set of:
\begin{align*}
h^{(0)}(P, \Psi)=\begin{pmatrix} 0 \\  \frac{1}{\epsilon} \left( a[ F-P]+ b[\Psi -r]  \right)-\Psi\end{pmatrix}.
\end{align*}
Then the zero set $\mathcal{V}(h^{(0)}) := \{(P,\Psi)^T\in\R^2: \Psi= \frac{1}{\epsilon-b}  (a[F-P]+b\ r)  \}$ does not solely consist of isolated points. 
\end{proof}

\begin{proposition}
Let the Beja-Goldman model be as defined in \eqref{BG} and assume $\epsilon> b$. Then the reduced system in the liquid chartist limit reads
\begin{align}
\dot{P}&= \frac{1}{\epsilon} [ a\ (F-P)+b\ (\Psi-r)] \label{RCL1}\\
 &=\frac{a(F-P)-rb}{\epsilon-b}\label{RCL2}
\end{align}
and the convergence holds for all $\tau >0$. $\Psi$ is given by $\Psi= \frac{1}{\epsilon-b} [a(F-P)-r\ b]$. 
The unique solution of the stock price equation is given by:
\begin{align*}
P(\tau)= \exp(-\tfrac{a}{\epsilon-b} \tau) \left(P(0)-(F-\tfrac{rb}{a})\right) + F-\tfrac{rb}{a}.
\end{align*}
\end{proposition}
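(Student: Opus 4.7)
The plan is to mimic the proof of Proposition \ref{PropRedLM} with $\gamma$ now playing the role of the small parameter, and to invoke Theorem \ref{theo} together with Proposition 2.10 of \cite{LaxHoppensteadt} for the long-time convergence.

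First I would put \eqref{BG} into the Tikhonov--Fenichel form $\dot{\boldsymbol{X}} = \tfrac{1}{\gamma} h(\boldsymbol{X},\gamma)$, reading off
$$h^{(0)}(P,\Psi) = \begin{pmatrix} 0 \\ \tfrac{1}{\epsilon}(a(F-P)+b(\Psi-r))-\Psi \end{pmatrix}, \qquad h^{(1)}(P,\Psi) = \begin{pmatrix} \tfrac{1}{\epsilon}(a(F-P)+b(\Psi-r)) \\ 0 \end{pmatrix}.$$
The hypotheses of Theorem \ref{theo} are then straightforward to verify: $Dh^{(0)}$ has rank one with unique nonzero eigenvalue $(b-\epsilon)/\epsilon$, which is strictly negative exactly under the standing assumption $\epsilon>b$, and this in turn supplies the direct sum decomposition $\R^2 = \ker Dh^{(0)}(x_0) \oplus \mathrm{im}\, Dh^{(0)}(x_0)$.

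Next I would take the natural product decomposition $h^{(0)} = K\mu$ with $K = \binom{0}{1}$ and $\mu(P,\Psi) = \tfrac{1}{\epsilon}(a(F-P)+b(\Psi-r))-\Psi$. Solving $\mu=0$ for $\Psi$ immediately gives the claimed slow-manifold relation $\Psi = (a(F-P)-rb)/(\epsilon-b)$, where the denominator is nonzero precisely because $\epsilon>b$. I would then compute the projector $Q = I - K(D\mu\,K)^{-1}D\mu$ and multiply $Qh^{(1)}$; since $K$ has only a nontrivial second coordinate, the first coordinate of $Qh^{(1)}$ is inherited unchanged from $h^{(1)}$, which immediately recovers the first form $\dot{P} = \tfrac{1}{\epsilon}(a(F-P)+b(\Psi-r))$. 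Substituting the slow-manifold relation for $\Psi$ then collapses the bracket (the $-rb$ and $+b\Psi$ terms combine to produce a common factor of $a(F-P)-rb$) and yields the equivalent form $\dot{P}=(a(F-P)-rb)/(\epsilon-b)$.

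Finally, the reduced price equation is a scalar linear ODE whose right-hand side has derivative $-a/(\epsilon-b)<0$, so the unique stationary point $P^* = F-rb/a$ is exponentially stable. Proposition 2.10 of \cite{LaxHoppensteadt} therefore applies in the same manner as in Proposition \ref{PropRedLM}, delivering convergence for all $\tau>0$, and elementary integration of $\dot{P} = -\tfrac{a}{\epsilon-b}(P-P^*)$ produces the stated closed-form solution. The only point I would not regard as completely routine is checking the auxiliary technical hypothesis in the LaxHoppensteadt result; I expect this to be the main obstacle, but it should follow, as in the liquid-market case, from the global linearity and stability properties of \eqref{BG}.
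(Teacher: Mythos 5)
Your proposal is correct and follows essentially the same route as the paper: the same Tikhonov--Fenichel setup with $h^{(0)}$, $h^{(1)}$, the decomposition $K=\bigl(\begin{smallmatrix}0\\1\end{smallmatrix}\bigr)$, $\mu$, the projector $Q$, and the appeal to Proposition 2.10 of \cite{LaxHoppensteadt} for convergence on all of $\tau>0$. You actually spell out the eigenvalue check and the substitution onto the slow manifold in more detail than the paper does, and you correctly flag the auxiliary technical hypothesis of the long-time convergence result as the one point the paper also leaves unverified.
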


\begin{proof}
The result for finite time follows directly by Tikhonov's Theorem \cite{tikhonov1952systems}. One also can use Theorem \ref{theo} with
\begin{align*}
&h^{(1)}(P,\Psi):= \begin{pmatrix}  \frac{1}{\epsilon} \left( a[ F-P]+ b[\Psi -r]  \right)   \\ 0  \end{pmatrix}
\end{align*}
as well as
$$
K:=\begin{pmatrix} 0 \\ 1 \end{pmatrix},\quad \mu(P,\Psi):= \frac{1}{\epsilon} \left( a[ F-P]+ b[\Psi -r]  \right)-\Psi.
$$
A straightforward computation shows 
\[
 Q=\begin{pmatrix}
  1 & 0 \\ \tfrac{a}{b-\epsilon} & 0
 \end{pmatrix}
\]
which gives \eqref{RCL1}, where $\Psi= \frac{1}{\epsilon-b} [a(F-P)-r b]$. Inserting $\Psi$ yields \eqref{RCL2}. The convergence on infinite intervals follows again by \cite{LaxHoppensteadt} Proposition 2.10 (in the same manner as in the proof of Proposition 4).  Solving the reduced equation is again a simple computation.
%Especially $h^{(0)}= K\ \mu$ holds and the Jacobian of $\mu$ reads:
%$$
%D\mu=(-\frac{a}{\epsilon}, \frac{b}{\epsilon}-1). 
%$$
%As before we compute
%\begin{align*}
%Q&= I- K(D\mu\ K)^{-1} D\mu\\
%&=\begin{pmatrix} 1 & 0 \\ 0 & 1 \end{pmatrix}- \frac{1}{\frac{b}{\epsilon}-1} \begin{pmatrix} 0 & 0 \\ -\frac{a}{\epsilon} & \frac{b}{\epsilon}-1  \end{pmatrix}
%\end{align*}
%Thus, the reduced system reads
%$$
%\dot{\boldsymbol{X}} = Q\ h^{(1)}= \begin{pmatrix}   \frac{1}{\epsilon} \left( a[ F-P]+ b[\Psi -r]  \right)\\  \frac{a}{b-\epsilon}\frac{1}{\epsilon} \left( a[ F-P]+ b[\Psi -r]  \right)   \end{pmatrix}
%$$
%The derived ODE for the quantity $\Psi$ can be equivalently replaced by the algebraic relation
%$$
%\Psi=  \frac{1}{\epsilon -b}  \left( a[ F-P]-r\ b  \right),
%$$
%which is the reduced system. 
\end{proof}

\begin{corollary}
The rigorously reduced system \eqref{RCL1} is identical to the heuristically derived system with an asymptotic expansion of type \eqref{AE} . 
\end{corollary}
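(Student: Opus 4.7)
The plan is to mimic almost verbatim the Hilbert expansion argument used in the preceding corollary, now with the small parameter being $\gamma$ rather than $\epsilon$. I would insert
\begin{align*}
P(t) &= P_0 + \gamma\, P_1 + \gamma^2\, P_2 + \dots,\\
\Psi(t) &= \Psi_0 + \gamma\, \Psi_1 + \gamma^2\, \Psi_2 + \dots
\end{align*}
into the Beja-Goldman system \eqref{BG}, with $\epsilon$ treated as a fixed positive constant, and then sort the resulting identities by powers of $\gamma$.

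First, the price equation \eqref{PriceODE} carries no $\gamma$ on its right-hand side, so matching at order $\mathcal{O}(1)$ yields $\dot P_0 = \tfrac{1}{\epsilon}\bigl(a(F-P_0)+b(\Psi_0-r)\bigr)$. Next, I would rewrite \eqref{ChartistODE} in the equivalent form $\gamma\,\dot\Psi = \tfrac{1}{\epsilon}\bigl(a(F-P)+b(\Psi-r)\bigr)-\Psi$ so that the orders are easier to read off. Matching at $\mathcal{O}(1)$ then produces the algebraic constraint $0 = \tfrac{1}{\epsilon}\bigl(a(F-P_0)+b(\Psi_0-r)\bigr)-\Psi_0$, and under the standing hypothesis $\epsilon>b$ this can be solved for $\Psi_0$ to give exactly the slow manifold relation $\Psi_0 = \tfrac{1}{\epsilon-b}\bigl(a(F-P_0)-rb\bigr)$ identified in the rigorous reduction.

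The final step is to substitute this $\Psi_0$ into the leading-order price equation. The only real computation is the algebraic simplification $\tfrac{1}{\epsilon}\bigl(a(F-P_0)+b\Psi_0-br\bigr) = \tfrac{1}{\epsilon}\bigl(a(F-P_0)-rb\bigr)\bigl(1+\tfrac{b}{\epsilon-b}\bigr) = \tfrac{a(F-P_0)-rb}{\epsilon-b}$, where the $1/\epsilon$ prefactor cancels with the factor $\epsilon/(\epsilon-b)$ produced by the common denominator. This yields $\dot P_0 = \tfrac{a(F-P_0)-rb}{\epsilon-b}$, which is precisely \eqref{RCL2}; leaving $\Psi_0$ implicit instead recovers \eqref{RCL1}. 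Hence the heuristically and the rigorously reduced systems coincide.

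I do not anticipate a genuine obstacle, since the structural setup parallels the liquid market corollary. The only step that demands a bit of care is the bookkeeping of powers of $\gamma$ in the chartist equation, together with the brief cancellation sketched above; no higher-order terms of the Hilbert expansion are required.
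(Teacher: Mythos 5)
Your proposal is correct and follows essentially the same route as the paper: a Hilbert expansion in powers of $\gamma$, with the order-$\mathcal{O}(1)$ price equation and the leading-order algebraic constraint from the chartist equation reproducing \eqref{RCL1}. You merely go one small step further than the paper by explicitly solving the constraint for $\Psi_0$ and substituting to recover \eqref{RCL2}, which is a harmless (and arguably welcome) addition.
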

\begin{proof}
As before, we insert a Hilbert expansion:
\begin{align*}
&P(t)= P_0 + \gamma\ P_1+\gamma^2 \ P_2+...\\
&\Psi(t)= \Psi_0 + \gamma\ \Psi_1+\gamma^2 \ \Psi_2+...
\end{align*}
Then comparing the  different orders, we get:
\begin{align*}
&\text{Price equation:}\ &&\mathcal{O}(0): \quad \dot{P}_0 = \frac{1}{\epsilon}[a(F-P_0)+b(\Psi_0-r)  ]      \\
&\text{Chartist equation:}\ &&\mathcal{O}(\frac{1}{\gamma}): \quad  \Psi_0 =\frac{1}{\epsilon}[a(F-P_0)+b(\Psi_0-r)  ].
\end{align*}
\end{proof}

\begin{itemize}
\item The solutions of the limit system \eqref{RCL1} satisfies again
\begin{align*}
&\lim\limits_{\tau\to\infty} \Psi(\tau) = 0,\\
&\lim\limits_{\tau\to\infty} P(\tau) = F-r \frac{b}{a}.
\end{align*}
The result is not surprising as the condition $\epsilon> b$ implies $a>\frac{1}{\gamma}(b-\epsilon)$ and thus corresponds to system \eqref{BG} being stable.
\item If $\epsilon< b$ holds then the slow manifold still exists and is still invariant, but now is repelling. Solutions starting on the slow manifold will now diverge to $\pm\infty$ as $\tau \to \infty$, which corresponds to the unstable case for system \eqref{BG}.
\item The case $\epsilon =b$ is degenerate: There exists no slow manifold in this case. As shown previously in the liquid market regime it is possible to study the original system for different small values $\gamma$. 
\end{itemize}

%\begin{figure}[h!]
%\begin{center}
%\includegraphics[width=0.45\textwidth]{figs/results/DegenerateEstimate/DegenerateEps=B01.pdf}
%\hfill
%\includegraphics[width=0.45\textwidth]{figs/results/DegenerateEstimate/DegenerateEps=B001.pdf}
%\end{center}
%\caption{LHS: Degenerate case with parameters $a=1,\ b=1,\ \gamma=0.1,\ \epsilon=1,\ r=0.1$.
%RHS: Degenerate case with parameters $a=1,\ b=1,\ \gamma=0.01,\ \epsilon=1,\ r=0.1$.}
%\end{figure}
%
%\begin{figure}[h!]
%\begin{center}
%\includegraphics[width=0.45\textwidth]{figs/results/DegenerateEstimate/DegenerateLC-01.pdf}
%\hfill
%\includegraphics[width=0.45\textwidth]{figs/results/DegenerateEstimate/DegenerateLC-001.pdf}
%\end{center}
%\caption{LHS: Degenerate case with parameters $a=1,\ b=1,\ \gamma=-0.1,\ \epsilon=1,\ r=0.1$.
%RHS: Degenerate case with parameters $a=1,\ b=1,\ \gamma=-0.01,\ \epsilon=1,\ r=0.1$.}
%\end{figure}

%\textcolor{red}{Welche Anfangswerte hast du für die Numerik verwendet? Für Anfangswerte $(P_0,\Psi_0)$ vom Originalsystem ergeben sich die richtigen reduzierten Anfangswerte $(\tilde P_0, \tilde \Psi_0)$
%\begin{align*}
% \tilde P_0 &= P_0 \\
% \tilde \Psi_0 &= \frac{1}{\epsilon-b}(a(F-P_0)-br)
%\end{align*}
%Wie vorher (hier nur intuitiver, weil Tikhonov Normalform) ergibt sich aus der Slow Manifold Bedingung \[\frac{1}{\epsilon}[a(F-\tilde P_0)+b(\tilde \Psi_0-r)]-\tilde \Psi_0=0\] in Verbindung mit dem 1. Integral $P$ des schnellen Systems $x'=h^0(x)$ (liefert $P_0=\tilde P_0$).
%}

We aim to complete this  section with some numerical tests for the liquid chartist limit. The asymptotic behavior of the solution of the original model and the reduced model is depicted in Figure \ref{RedC}.  The distance between both models with respect to  $\gamma$ is shown in Figure \ref{ErrorC}. Both numerical tests support our analysis that for $\gamma\to 0$ the Beja-Goldman model converges to the reduced model \eqref{RCL1}. 
Furthermore, in the case $\epsilon<b$ no reduction exists and Figure \ref{RepChart} depicts the repelling slow manifold. Here, we see that the solution of the  formally reduced system remains on the slow manifold while the solution of the original model oscillates with increasing amplitude.

\begin{figure}[h!]
\begin{center}
\includegraphics[width=0.6\textwidth]{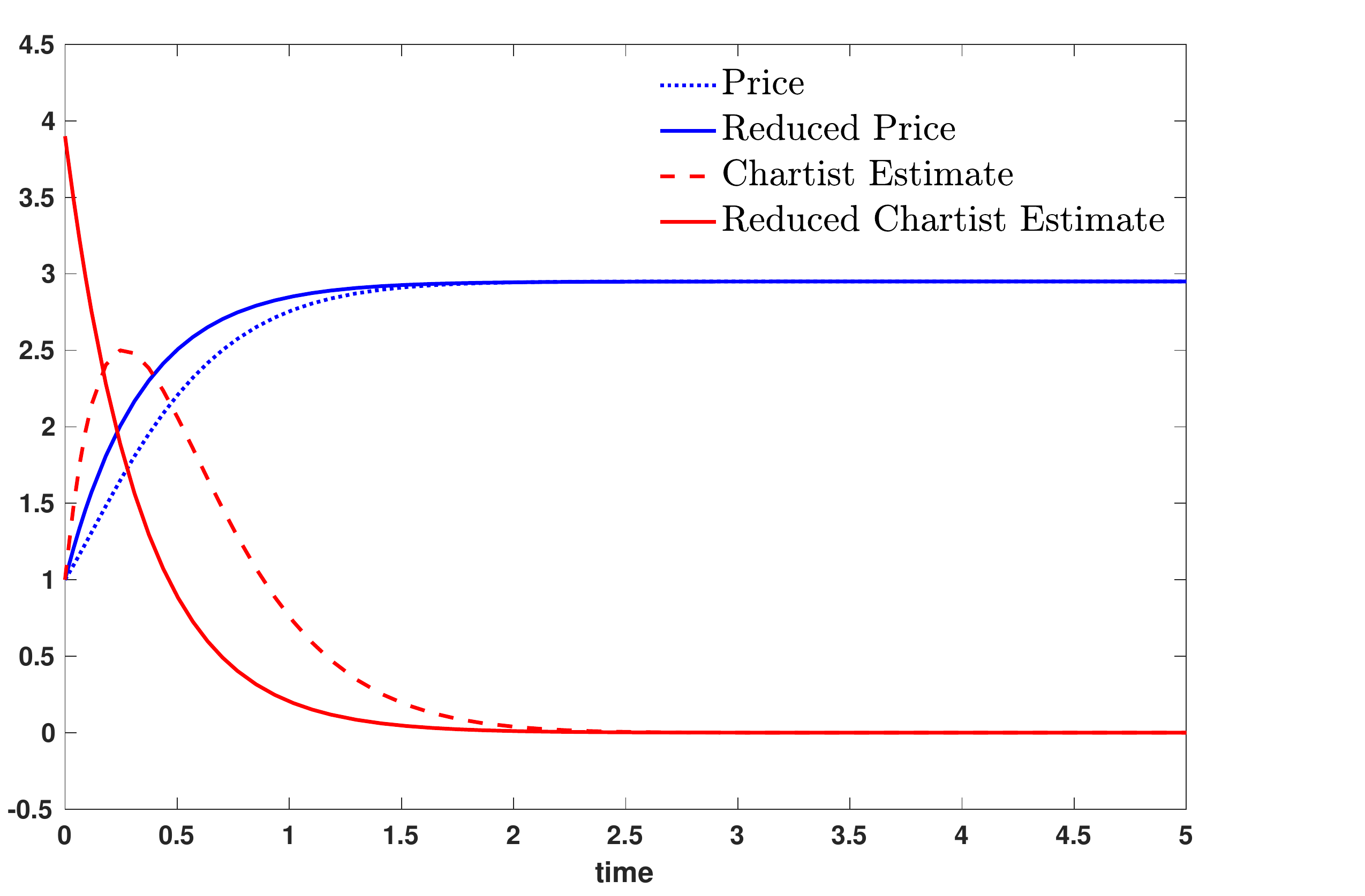}
\end{center}
\caption{Time evolution of the original Beja Goldman model and reduced model in the liquid chartist regime ($\epsilon= 2, a=2, b=1,\gamma=0.1,F=3,r=0.1$). %Initial values of reduced model $\tilde{P}_0=1,\ \tilde{\Psi}_0=3.9$.
 }\label{RedC}
\end{figure}

\begin{figure}[h!]
\begin{center}
\includegraphics[width=0.6\textwidth]{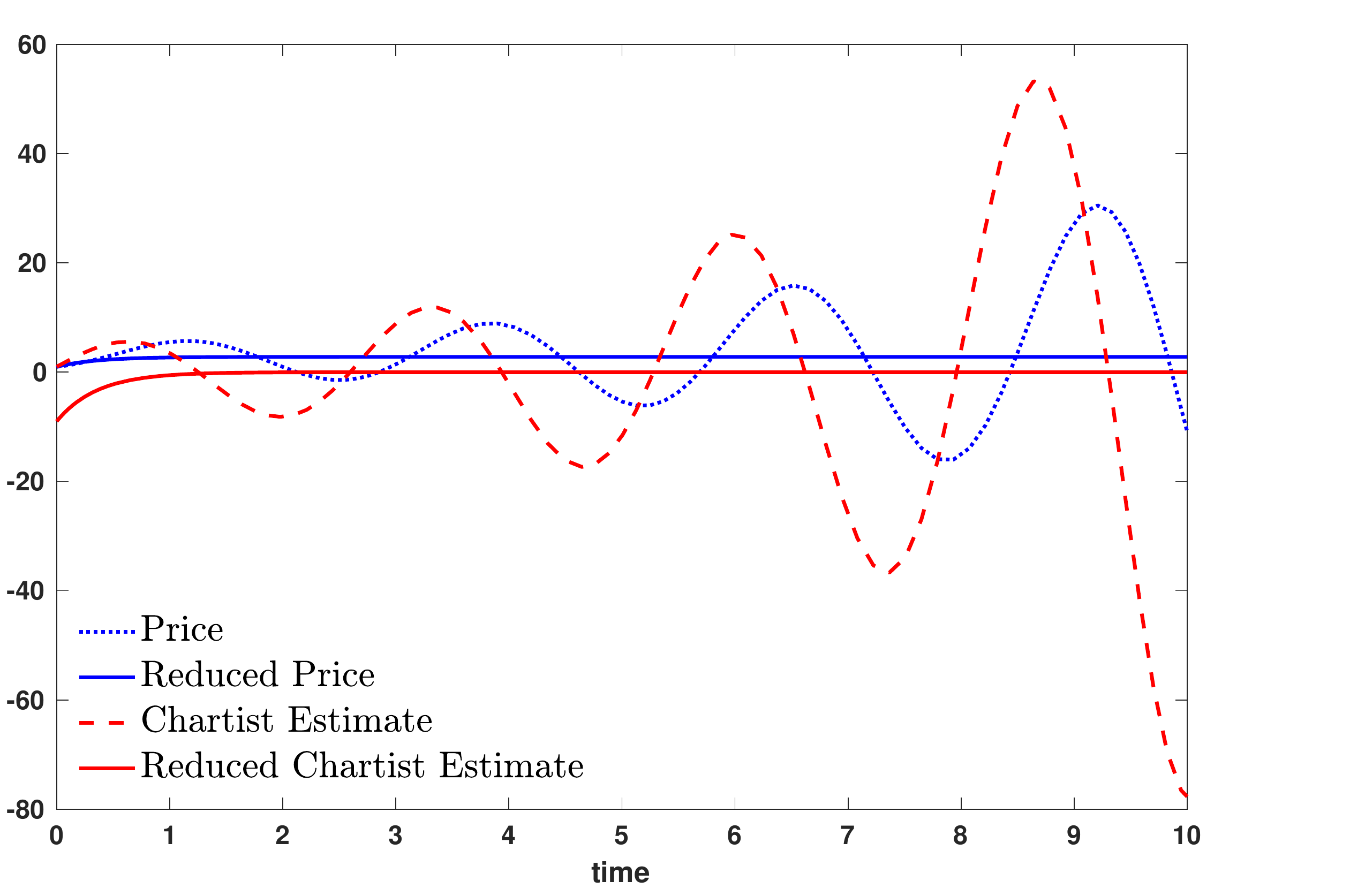}
\end{center}
\caption{Non-stable case with diverging reduced model. Initial values $P_0=1,\ \Psi_0= 1$ are  no elements of the  slow manifold.
Further parameters are given by $\epsilon=1.8,  a=1, b=2,\gamma=0.1,F=3,r=0.1$.
 }\label{RepChart}
\end{figure}

\begin{figure}[h!]
\begin{center}
\includegraphics[width=0.6\textwidth]{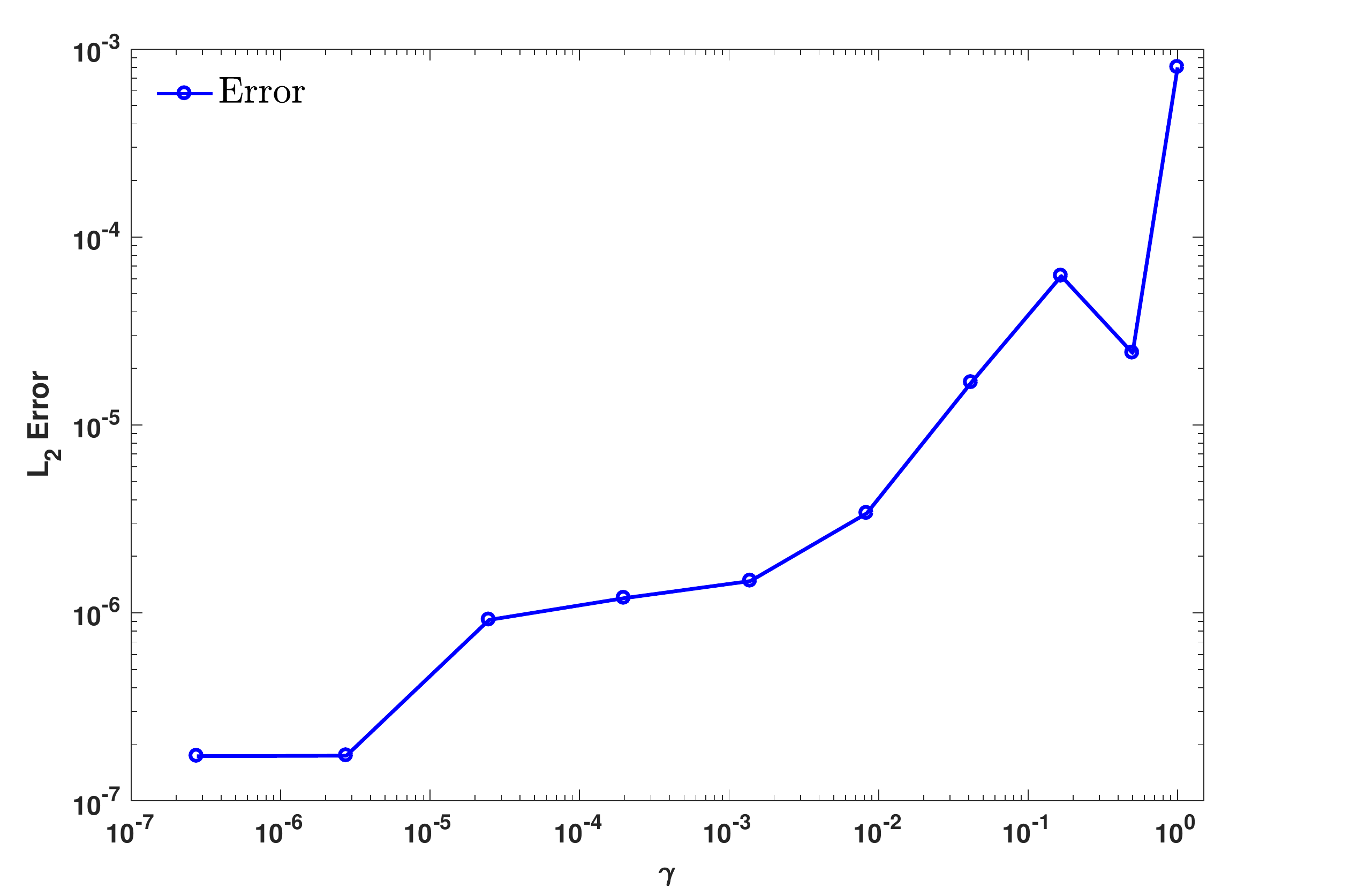}
\end{center}
\caption{The $L_2$ distance at time $t=10$ of the original Beja-Goldman model and the reduced model for different values $\gamma$. The other parameters are chosen to be: $ a=2, b=1,\epsilon=3,F=3,r=0.1$}\label{ErrorC}
\end{figure}

\clearpage

\section{Economic Implications and Conclusion }
Before we discuss the liquid market and liquid chartist limit we want to quickly summarize the behavior of the original Beja-Goldman model in the standard parameter regime.
The stability behavior depends on an interplay of the parameters $a,b,\epsilon,\gamma$. In particular, the system is always stable if the strength of chartist $b$ is below the inverse market depth $\epsilon$ i.e. $b<\epsilon$. 
If no chartists are present $(b=0)$ the system is always stable. Increasing the impact of chartist leads after a certain threshold to stable oscillatory behavior, then to unstable oscillatory behavior and to a blow up for sufficiently large $b$. In fact, not only the impact of chartists, but especially the speed of the adjustment of their estimate $\frac{1}{\gamma}$ heavily influences the price behavior.  \\ \\
In case of a valid reduction we expect that the behavior of the original system is well approximated by the corresponding reduction. Thus, in summary the model characteristics of the reduced model in the 
liquid market limit and liquid chartist limit are given by:
\begin{itemize}
\item \textbf{Liquid Market Limit} In the equilibrium market case, which corresponds to the limit of an infinite market depth, the reduced system only depends on the parameter $a,b,\gamma$. The system is stable and is a valid reducton of the original system if the strength of fundamentalists $a$ is larger than the strength of chartists $b$ times the speed of the adjustment of the estimate $\frac{1}{\gamma}$ i.e. $a>\frac{b}{\gamma}$. Thus, the price converges to the value $P_{\infty}=F-r\frac{b}{a}$ and the chartist estimate to zero. 
\item \textbf{Liquid Chartist Limit}
The behavior of the reduced system in the liquid chartist case is very similar to the liquid market case. The reduced model depends on the parameter $a,b, \epsilon$.  The reduced system exists and  is stable if the inverse market depth $\epsilon$ is larger that the strength of chartists $b$ i.e. $\epsilon>b$. The asymptotic equilibrium price is given by $P_{\infty}=F-r\frac{b}{a}$ and the chartist estimate converges to zero.
 \end{itemize}
The previously described market regimes are only present in extreme situations i.e. in case of an infinite market depth or in case of infinitely fast high frequency trader. 
Unfortunately, the reduced models are only a valid approximation of the original model for special sets of parameters. As discussed previously, the reduction in the liquid market limit (liquid chartist limit) is only a reduction of the original model for $a>\frac{b}{\gamma}$ ($\epsilon>b$).  Therefore one may ask critically what the benefits of  Tikhonov-Fenichel reductions are. In general, we habe 3 main advantages:
\begin{itemize}
\item The reductions replicate original model behavior.
\item The reductions are simpler to analyze than the original systems (as the dimension is always lower).
\item The reductions make it possible to study the original system even if no explicit solution of the original model exists.
\end{itemize}
In cases where the original model can be solved explicitly, Tikhonov-Fenichel reductions still can be applied but one does not obtain novel insights of the original model. 
Therefore this study has to be seen as a novel example to utilize the asymptotic theory, although we do not gain new information on the Beja-Goldman model. 
This work rather shows the novel interpretation of a financial market model as a singular perturbation problem and exemplified we study the Tikhonov-Fenichel reductions. 
Thus, we have derived the reduced system of the Beja-Goldman model for two asymptotic limits, the liquid market limit and the liquid chartist limit.
Furthermore, we have analyzed the behavior of these reduced systems and have verified our analysis by several numerical test. 
Especially the simple analysis of the reduced systems which perfectly replicate the asymptotic behavior of the original model reveals the importance of Tikhonov-Fenichel reductions.\\ \\
Finally, we want to point out that such an analysis can be extended to more general disequilibirum models provided they are in singular perturbation form. 
This is of major importance in the case that the original model is more complicated (i.e. of higher dimension) and the asymptotic behavior cannot be studied directly.

\section*{Acknowledgement}
T. Trimborn was funded by the Deutsche Forschungsgemeinschaft (DFG, German Research Foundation) under Germany's Excellence Strategy – EXC-2023 Internet of Production – 390621612.\\
T. Trimborn gratefully acknowledges support by the Hans-Böckler-Stiftung and the RWTH Aachen University Start-Up grant. 
T. Trimborn acknowledges the support by the ERS Prep Fund - Simulation and Data Science. 
The work was partially funded by the Excellence Initiative of the German federal and state governments.

\clearpage

%-- LITERATUR ----------------------------------------------------------%
	\clearpage
	\bibliography{literature.bib}
		\bibliographystyle{abbrv}

\end{document}